\newcommand{\bra}[1]{\mbox{$\langle #1 |$}}
\newcommand{\ket}[1]{\mbox{$| #1 \rangle$}}
\newcommand{\bk}[2]{\ensuremath{\langle #1 | #2 \rangle}}
\newcommand{\kb}[2]{\ensuremath{| #1 \rangle\!\langle #2 |}}
\newcommand{\sr}{\rule[-0.1cm]{0pt}{0.55cm}}
\newtheorem{theorem}{Theorem}
\newtheorem{lemma}[theorem]{Lemma}
\newtheorem{definition}[theorem]{Definition}
\newtheorem{remark}{Remark}
\begin{document}

\title{Critical points of the linear entropy for pure $L$-qubit states}

\author{Tomasz Maci\c{a}\.{z}ek}
\address{Center for Theoretical Physics, Polish Academy of
Sciences, Al. Lotnik\'ow 32/46, 02-668 Warszawa, Poland}
\address{Faculty of Physics, University of Warsaw, ul. Pasteura 5, 02-093 Warszawa, Poland}
\author{Adam Sawicki}
\address{Center for Theoretical Physics, Polish Academy of
Sciences, Al. Lotnik\'ow 32/46, 02-668 Warszawa, Poland}
\address{School of Mathematics, University of Bristol, University Walk, Bristol BS8 1TW, UK}
\address{Center for Theoretical Physics, Massachusetts Institute of Technology, 77 Massachusetts Avenue, Cambridge, MA 02139, USA}
\date{\today}

\begin{abstract}
We present a substantially improved version of the method proposed
in \cite{SOK12, SOK14} for finding critical points of the linear entropy
for $L$-qubit system. The new approach is based on the correspondence
between momentum maps for abelian and non-abelian groups, as described
in \cite{K84-1}. The proposed method can be implemented numerically much easier than the previous one.
\end{abstract}

\maketitle

\section{Introduction }

In this paper we consider pure states of the quantum system consisting of $L$ qubits. Throughout, we assume that considered qubits are isolated. Such setting can be realised as, for example, a set of dual-rail photons or two-level atoms. In the first case, a qubit is encoded by a single photon occupying two orthogonal modes, while in the second case the role of a qubit is played by an electron in an atom with two non-degenerate energy levels. Considering a number of isolated qubits means taking into account only the systems of noninteracting, distinguishable particles. Continuing the examples above, a possible realization would be a set of  two-mode noninteracting waveguides or a set of noninteracting atoms respectively. The hamiltonian evolution of such system is described by a group of local unitaries, $K=SU(2)^{\times L}$, acting independently on each qubit, i.e. for $k=(k_1,\dots,k_L)\in K$
\begin{equation*}
\fl k.\ket{\phi_1}\otimes\ket{\phi_2}\otimes\dots\otimes\ket{\phi_L}=\ket{k_1.\phi_1}\otimes\ket{k_2.\phi_2}\otimes\dots\otimes\ket{k_L.\phi_L},\ k_i\in SU(2).
\end{equation*}  
In the above expression $\ket{\phi_i}\in\mathbb{C}^2$, as the Hilbert space, $\mathcal{H}$, of the considered $L$-qubit system is equal to $\left(\mathbb{C}^2\right)^{\otimes L}$.
Moreover, the only measurements which are available are restricted to the local one-qubit measurements and the allowed operations are, in general, the stochastic local operations assisted by classical communication (SLOCC).  In other words, the SLOCC operations are represented by the local linear operators. However, in our work we restrict ourselves to considering only invertible SLOCC operations, namely to the action of group $G=SL(2,\mathbb{C})^{\times L}$. 

Another consequence of the partition into $L$ distinguishable noninteracting parties is that one does not need to know exactly the state $\phi\in\mathcal{H}$ in order to calculate expectation values of the available local measurements. In fact, from a single qubit perspective, the necessary and sufficient information about $\phi\in\mathcal{H}$ is encoded in one-qubit reduced density matrices $\{\rho_{1}(\phi),\ldots,\rho_{L}(\phi)\}$. The knowledge of the reduced one-qubit density matrices allows one to define the total variance of state $\phi$ \cite{SOK12, SOK14, klyachko}, or equivalently the linear entropy at the state $\phi$, $\mathrm{E}(\phi)$, which we describe in more detail in the next section. The main importance of the linear entropy of the reduced matrices is that it can be used for quantifying entanglement in the system. The arguments for such an interpretation are the following. The linear entropy of a density matrix, $\rho$, is also called the impurity of a state and is given by the formula $E(\rho)=1-\Tr(\rho^2)$. For a pure state $\rho^2=\rho$ and therefore the linear entropy is equal to zero. However, performing the reduction of a density matrix of a pure state by tracing over a subsystem usually results with a mixed state. The mixedness of the reduced density matrices of a pure state can arise only from the entanglement between the subsystems. Therefore, one can conjecture that the linear entropy of the reduced density matrices can be in this case a good measure of the entanglement between the subsystems. Such observations have been already made and used in various contexts, for example in investigating entanglement in multi-qubit-cavity systems \cite{example_cavity}, in two-fermion systems \cite{example_fermions}  or in ions \cite{example_ions}. In this paper, for the system of $L$ qubits, we aim to give a systematic and tractable method for finding critical points of the linear entropy. A critical point is a state at which the derivative of the linear entropy is equal to zero. Moreover, if $\phi$ is such a critical point then the linear entropy restricted to the set $G.\phi:=\{g.\phi,\ g\in SL(2,\mathbb{C})^{\times L}\}$ attains its maximum value \cite{Ness} at $\phi$. Hence, by our interpretation of the linear entropy, such critical points are the locally maximally entangled states. As was shown in \cite{SOK12, SOK14} critical points of $\mathrm{E}(\phi)$ also play an essential role in the classification of states with respect to SLOCC operations as they parametrize quantum states that are asymptotically equivalent, i.e. can be transformed into one another by applying an infinite sequence of SLOCC operations. In \cite{SOK12, SOK14} one of us \footnote{with M. Ku\'s and M. Oszmaniec} outlined the method which in principle allows finding critical points of $E(\phi)$ for arbitrary system of both distinguishable and indistinguishable particles. To illustrate the method some low dimensional examples were calculated. Here we substantially improve the method proposed in \cite{SOK12, SOK14}. Our improvement is based on the general theory of the critical points of the momentum map as presented in \cite{K84-1}. In particular it is based on the interplay between momentum maps for abelian and non-abelian groups. The general construction of the momentum map can be found, for example in \cite{GS}. The motivation for considering the momentum map is the description of symmetries in a given system. For example, in a classic mechanical system whose Hamiltonian is invariant under the group of translations, the momentum map is a map which assigns to a particle's trajectory its classical momentum. Because the space of pure quantum states has the same properties as the classical phase space, namely both are symplectic manifolds, one can use the symmetry of preserving the norm of $\phi\in\mathcal{H}$ by the group of local unitaries in order to construct the momentum map.

The paper is organized as follows. First, in section \ref{sec:The-linear-entropy}
we review some useful facts concerning reduced one-qubit density matrices,
linear entropy and its critical points. In particular we briefly describe
the method proposed in \cite{SOK12, SOK14}. In section \ref{crit-mu} we discuss the connection between the linear entropy and the momentum map. In particular we consider two momentum maps, for the action of local unitary group and its abelianization. Then, in section 4 we describe the implementation for $L$ qubits.

\section{The linear entropy and its critical points\label{sec:The-linear-entropy}}

In this section we shortly review the relevant background informations
concerning reduced one-qubit density matrices, the linear entropy
and its critical points.

Let $\mathcal{H}=\left(\mathbb{C}^{2}\right)^{\otimes L}$ be the
Hilbert space of $L$-qubit system. Recall that reduced one-qubit
density matrices, $\{\rho_{i}(\phi)\}_{i=1}^{L}$, are the $2\times2$
density matrices that satisfy

\begin{eqnarray}
\label{eq:reduced_def}
\frac{\bk{\phi}{A_{1}\otimes\mathbbm{1}\otimes\ldots\otimes\mathbbm{1}+\ldots+\mathbbm{1}\otimes\ldots\otimes\mathbbm{1}\otimes A_{L}|\phi}}{\bk{\phi}{\phi}}=\sum_{i=1}^{L}\Tr(A_{i}\rho_{i}),
\end{eqnarray}
for an arbitrary set of $2\times2$ hermitian matrices $\{A_{i}\}_{i=1}^L$ which represents
an observable measured on the $i$-th qubit. In order to clarify the notion of a reduced matrix, let us also give an alternative, basis - dependent definition, incorporating the operation of the partial trace of a matrix. We will focus on a bipartite system, as the definition for arbitrary multipartite systems has a straightforward extension. If $\rho_{AB}$ is a density matrix of a two partite system and $\{\ket{i}_B\}$ is an orthonormal basis of the Hilbert space of subsystem $B$, then the reduced density matrix of subsystem $A$ is defined by
\begin{equation*}
\rho_A=\Tr_B(\rho_{AB})=\sum_{i} {}_B\bra{i}\rho_{AB}\ket{i}{}_B.
\end{equation*}
In the case of more particles, in order to find one-particle reduced density matrices, one has to perform the above operation repeatedly with respect to each subsystem. Such a reduced matrix can be interpreted as an averaged state of the chosen particle, where the average is taken over all remaining particles.

\noindent Let us next define two functions, which we will denote by $\mu$ and $\Psi$ that act on a space of quantum states and return a set of $2\times2$ traceless hermitian matrices. In a group theoretical language, maps $\mu$ and $\Psi$ assign to a quantum state a set of elements of the Lie algebra $\mathfrak{su}(2)$ multiplied by the imaginary unit. The above mentioned hermitian matrices are the shifted one-qubit reduced density matrices, namely
\begin{eqnarray}
\mu(\phi)=\{\rho_{1}(\phi)-\frac{1}{2}\mathbbm{1},\rho_{2}(\phi)-\frac{1}{2}\mathbbm{1},\ldots,\rho_{L}(\phi)-\frac{1}{2}\mathbbm{1}\},\nonumber \\
\Psi(\phi)=\{\tilde{\rho}_{1}(\phi)-\frac{1}{2}\mathbbm{1},\tilde{\rho}_{2}(\phi)-\frac{1}{2}\mathbbm{1},\ldots,\tilde{\rho}_{L}(\phi)-\frac{1}{2}\mathbbm{1}\},\label{eq:set-of-matrices}
\end{eqnarray}
where each $\tilde{\rho}_{i}$ is a diagonal $2\times2$ matrix whose
diagonal elements are given by the increasingly ordered spectrum of
$\rho_{i}$. The map $\Psi$ assigns to a state $\phi$
the collection of the shifted spectra of its (shifted) one-qubit reduced density matrices. Let $\{p_{i},\,1-p_{i}\}$
be the ordered spectrum of $\rho_{i}$, that is, $\frac{1}{2}\geq p_{i}\geq0$
and $p_{i}\leq1-p_{i}$. The shifted spectrum, i.e. the spectrum of
$\rho_{i}(\phi)-\frac{1}{2}I$ is given by $\{-\lambda_{i},\lambda_{i}\}$
where $0\leq\lambda_{i}=\frac{1}{2}-p_{i}\leq\frac{1}{2}$ . Under
these assumptions the image $\Psi(\mathcal{H})$ is known to be a
convex polytope, defined by the set of polygonal inequalities \cite{HSS03}

\begin{eqnarray}
\forall_{i}\ \frac{1}{2}-\lambda_{i}\leq\sum_{i\neq j}\left(\frac{1}{2}-\lambda_{i}\right)\,.\label{nier1}
\end{eqnarray}
The images of $\Psi$ in the case of two and three qubits are shown in Figure \ref{fig:polytopes}. Note that for $L=2$ inequalities (\ref{nier1}) imply that $\lambda_1=\lambda_2$, which is a manifestation of a well known fact that for a two-partite system, the spectra of the one-particle reduced density matrices are identical. The line from $\mathrm{v}_{GHZ}$ to $\mathrm{v}_{SEP}$ on Figure \ref{fig:polytopes}(a) can be parametrized using the Schmidt decomposition. Namely, any two-qubit state can be can be written in a proper basis as
\begin{equation*}
\ket{\phi(\theta)}=\sin\frac{\theta}{2}\ket{00}+\cos\frac{\theta}{2}\ket{11},\ \theta\in\left[0,\frac{\pi}{2}\right].
\end{equation*}
For such a state we have $\lambda_1=\lambda_2=\cos^2\frac{\theta}{2}-\frac{1}{2}=\frac{1}{2}\cos\theta$ and $\ket{\phi(0)}=\ket{11}$, while $\ket{\phi(\frac{\pi}{2})}=\frac{1}{\sqrt{2}}\left(\ket{00}+\ket{11}\right)$. In the case of $L=3$, the polytope is three-dimensional and is spanned on $5$ vertices.
\begin{figure}[h]
\centering
\includegraphics[width=.9\textwidth]{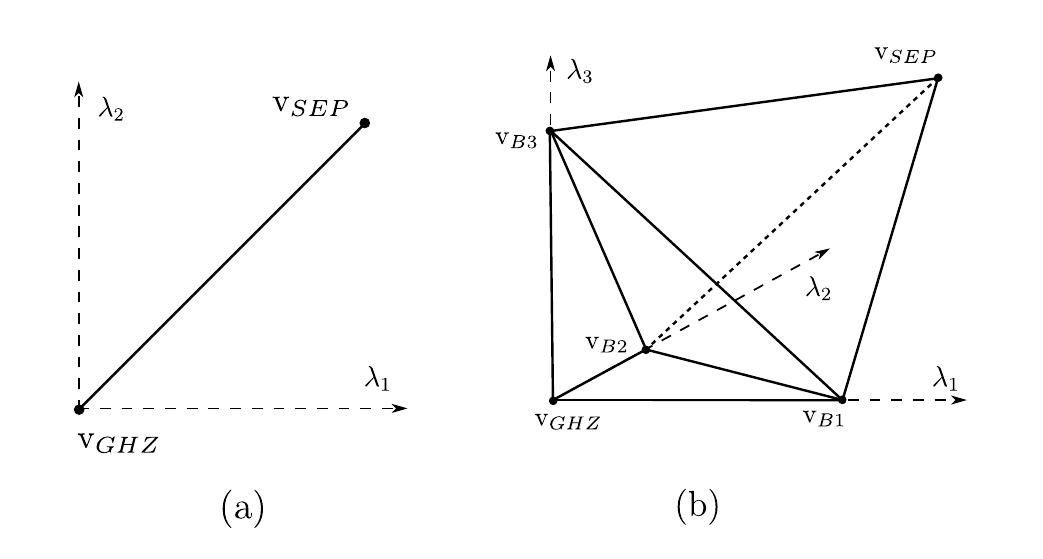}
\caption{The images of map $\Psi$ for (a) two and (b) three qubits. Point $\mathrm{v}_{GHZ}$ is the image of $\ket{GHZ}=\frac{1}{\sqrt{2}}\left(\ket{00}+\ket{11}\right)$ or $\ket{GHZ}=\frac{1}{\sqrt{2}}\left(\ket{000}+\ket{111}\right)$ respectively, while points $\mathrm{v}_{Bi}$ correspond to biseparable states.}
\label{fig:polytopes}
\end{figure}

\noindent Note also that both $\mu$ and $\Psi$ are not sensitive to changes of the global phase and norm of state $\phi\in\mathcal{H}$ and as such are defined on the complex projective space $\mathbb{P}(\mathcal{H})$ rather than $\mathcal{H}$. The complex projective space is a set of equivalence classes of vectors $v\in\mathcal{H}-\{0\}$, where all vectors belonging to the same complex line are identified. In other words, $v_1\sim v_2$ if and only if $v_2=cv_1$, $c\in\mathbb{C}$. The notion of a complex projective space emerges in a natural way while considering quantum states, because a pure quantum state can be viewed as a representative of some equivalence class of relation $\sim$. Space $\mathbb{P}(\mathcal{H})$ has some important geometric properties, which we will use in the following sections.

\paragraph*{The linear entropy} In the following paragraph, we review some physical aspects of the linear entropy. One way to quantify entanglement of a pure multipartite state is to calculate the von Neumann entropy of the reduced density matrices. This is because the von Neumann entropy of a quantum state is nonzero only if the considered state is mixed. On the other hand, the only reason for a reduced density matrix of a pure state to be mixed is the presence of entanglement between the particles. Therefore, one may expect that the sum of von Neumann entropies of all reduced density matrices quantifies the amount of entanglement in the system. Let us now return to the case of $L$ qubits. The spectrum of the $i$th one-qubit reduced density matrix is given by $\{p_i^{(0)},p_i^{(1)}\}$, so the von Neumann entropy is of the form
\begin{equation*}
E_N(\rho_i)=-p_i^{(0)}\log p_i^{(0)}-p_i^{(1)}\log p_i^{(1)}.
\end{equation*}   
Approximating each of the logarithms by $p_i^{(j)}-1,\ j=0,1$ (see Figure \ref{fig:approximation} for the plot of the approximation versus the original function) and using the fact that $p_i^{(0)}+p_i^{(1)}=1$, one obtains
\begin{equation*}
\fl E_N(\rho_i)\approx\sum_{j=0}^1 p_i^{(j)}\left(1- p_i^{(j)}\right)=1-\sum_{j=0}^1 \left(p_i^{(j)}\right)^2=1-\Tr\left(\rho_i^2\right).
\end{equation*}

\begin{figure}[h]
\centering
\includegraphics[width=.7\textwidth]{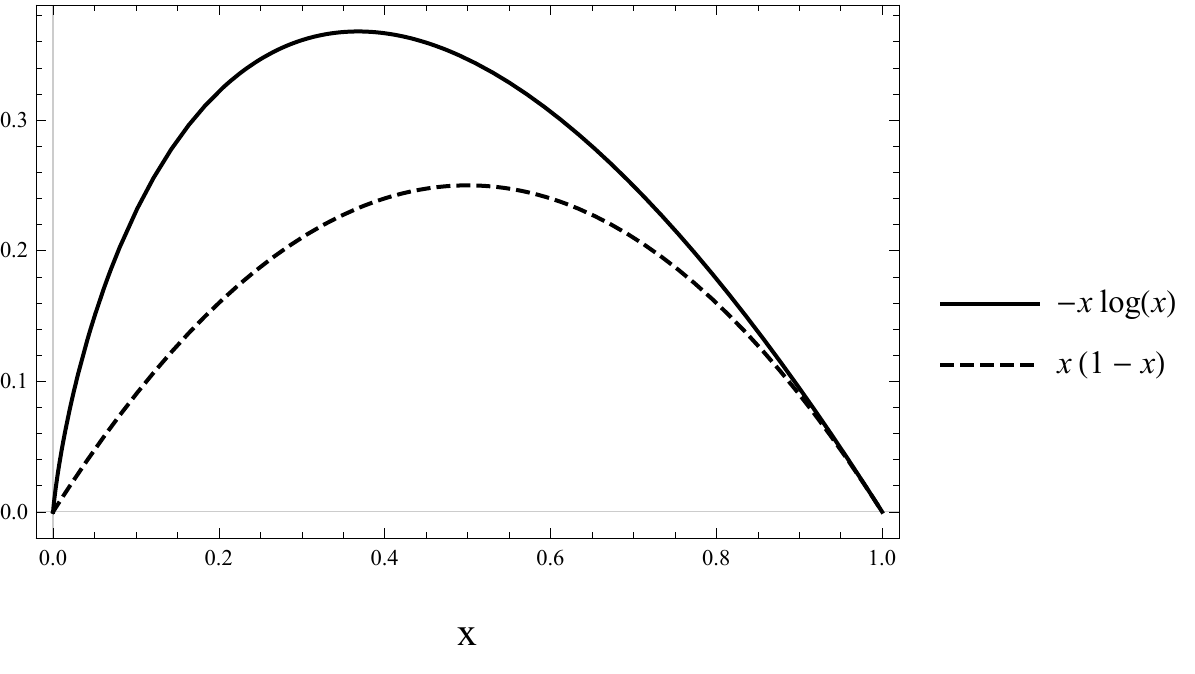}
\caption{Plot of the approximation used for linearizing the von Neumann entropy. The approximated function was $-x\log(x)$.}
\label{fig:approximation}
\end{figure}
Summing the approximations of the von Neumann entropy for all one-qubit reduced density matrices and multiplying the result by factor $\frac{1}{L}$, one gets the formula for the linear entropy of a state $\phi$ \cite{SOK12, SOK14}
\begin{eqnarray}
E(\phi)=1-\frac{1}{L}\sum_{i=1}^{L}\Tr(\rho_{i}^{2}(\phi)).\label{eq:linear-entropy}
\end{eqnarray}
Another way to interpret the linear entropy as a quantification of entanglement in a quantum system of many particles, is to consider quantum fluctuations as a manifestation of entanglement. Indeed, the more entangled a quantum state is, the more inaccurate the results of measurements performed on this state are. This approach allows one to define the linear entropy as an accumulated variance of some set of observables, i.e. a quantity that can be \textit{a priori} measured. To prove this statement, let us define after Klyachko \cite{Klyachko2007} the total variance of a pure state
\begin{equation}
\mathrm{Var}(\phi):=\frac{1}{\bk{\phi}{\phi}}\left(\sum_{i=1}^{\dim K}\bra{\phi}\xi_i^2\ket{\phi}-\frac{1}{\bk{\phi}{\phi}}\sum_{i=1}^{\dim K}\left[\bra{\phi}\xi_i\ket{\phi}\right]^2\right),
\end{equation}
where $\{\xi_i\}_{i=1}^{\dim K}$ is an orthonormal basis of the traceless local hermitian operators acting on $\mathcal{H}$. Note that the space of (traceless) local hermitian operators is in our setting equal to the Lie algebra of group K, modulo the multiplication by the imaginary unit, therefore their dimensions are the same. An exemplary basis of such a hermitian operators can be constructed from the Pauli matrices, $\sigma_{x,y,z}$, namely
\[\mathcal{X}_{k}=\mathbbm{1}\otimes\mathbbm{1}\otimes...\otimes\sigma_{x}\otimes\mathbbm{1}\otimes...\otimes\mathbbm{1},\]
\[\mathcal{Y}_{k}=\mathbbm{1}\otimes\mathbbm{1}\otimes...\otimes\sigma_{y}\otimes\mathbbm{1}\otimes...\otimes\mathbbm{1},\]
\[\mathcal{Z}_{k}=\mathbbm{1}\otimes\mathbbm{1}\otimes...\otimes\sigma_{z}\otimes\mathbbm{1}\otimes...\otimes\mathbbm{1}.\]
Using formula (\ref{eq:reduced_def}), one has that
\begin{equation}
\label{eq:sigma_x}
\frac{1}{\bk{\phi}{\phi}^2}\left[\bra{\phi}\mathcal{X}_{k}\ket{\phi}\right]^2=\left(\Tr[\rho(\phi)\mathcal{X}_{k}]\right)^2=\left(\Tr[\rho_k(\phi)\sigma_x]\right)^2.
\end{equation}
On the other hand, every $\rho_i$ can be written as a linear combination of Pauli matrices, $\rho_i=\frac{1}{2}\left(\mathbbm{1}+n_x\sigma_x+n_y\sigma_y+n_z\sigma_z\right)$, where $n_l=\Tr(\rho_i\sigma_l)$. Therefore, equation (\ref{eq:sigma_x}) becomes
\begin{equation*}
\frac{1}{\bk{\phi}{\phi}^2}\left[\bra{\phi}\mathcal{X}_{k}\ket{\phi}\right]^2=n_x^2
\end{equation*}
and analogical result applies to $\mathcal{Y}_{k}$ and $\mathcal{Z}_{k}$. Moreover, one can check by straightforward calculation that $\Tr[\rho_k(\phi)^2]=\frac{1}{2}\left(1+n_x^2+n_y^2+n_z^2\right)$. Hence, 
\begin{equation*}
\fl\frac{1}{\bk{\phi}{\phi}^2}\left(\left[\bra{\phi}\mathcal{X}_{k}\ket{\phi}\right]^2+\left[\bra{\phi}\mathcal{Y}_{k}\ket{\phi}\right]^2+\left[\bra{\phi}\mathcal{Z}_{k}\ket{\phi}\right]^2\right)=n_x^2+n_y^2+n_z^2=2\Tr[\rho_k(\phi)^2]-1.
\end{equation*}
Finally, using the fact that $\sigma_l^2=\mathbbm{1}$, one obtains the formula for the total variance of a pure state of $L$ qubits
\begin{equation}
\mathrm{Var}(\phi)=4L-2\sum_{i=1}^L\Tr[\rho_i(\phi)^2]=2L[E(\phi)+1]
\end{equation}
For a more detailed discussion, see \cite{SOK14}.

\noindent Having the above justification for considering the linear entropy as a measure of entanglement, the search for critical points of function $E(\cdot)$ would be of a great importance. The state $\phi$ is a critical point of $E(\cdot)$ if and only if
the differential of $E$ vanishes at $\phi$, $dE(\phi)=0$. In fact, function $E(\cdot)$ restricted to the $G$-orbit through a critical point, attains its maximum value at this point \cite{SOK14,Ness}. In order to characterize the critical points of $E$ we define the local operators constructed from $\{\rho_{i}(\phi)\}_{i=1}^{L}$ 

\begin{eqnarray}
\fl \widehat{\mu}(\phi)=\left(\rho_{1}(\phi)-\frac{1}{2}\mathbbm{1}\right)\otimes\mathbbm{1}\otimes\ldots\otimes\mathbbm{1}+\ldots+\mathbbm{1}\otimes\ldots\otimes\mathbbm{1}\otimes\left(\rho_{L}(\phi)-\frac{1}{2}\mathbbm{1}\right),\label{eq:operators}\\
 \fl \widehat{\Psi}(\phi)=\left(\tilde{\rho}_{1}(\phi)-\frac{1}{2}\mathbbm{1}\right)\otimes\mathbbm{1}\otimes\ldots\otimes\mathbbm{1}+\ldots+\mathbbm{1}\otimes\ldots\otimes\mathbbm{1}\otimes\left(\tilde{\rho}_{L}(\phi)-\frac{1}{2}\mathbbm{1}\right)\,.\nonumber 
\end{eqnarray}
As was shown in \cite{SOK12, SOK14}, $\phi$ is critical if and only if 

\begin{eqnarray}
\widehat{\mu}(\phi)\phi=\lambda\phi.\label{eq:eigen-proble}
\end{eqnarray}
The eigenproblem (\ref{eq:eigen-proble}) is typically hard to solve
as the matrix $\widehat{\mu}(\phi)$ depends in the non-linear way
on $\phi$. Note, however, that the linear entropy is $K$-invariant
function, that is, for any local unitary $U\in K$ 

\begin{eqnarray*}
E(U\phi)=1-\frac{1}{L}\sum_{i=1}^{L}\mathrm{tr}(\rho_{i}^{2}(U\phi))=1-\frac{1}{L}\sum_{i=1}^{L}\Tr(U\rho_{i}^{2}(\phi)U^{\dagger})=E(\phi).
\end{eqnarray*}
Therefore, the critical points of $E(\cdot)$ form $K$-orbits. In other words, if $\phi$ is a critical point, then the whole set $\mathcal{O}_\phi=\{\psi:\ \psi=U\phi, U\in K\}$, called a $K$-orbit through point $\phi$, is critical. Moreover,
 when we act with local unitaries, $U=(U_1,\dots,U_L)\in K$, on a state $\phi$, each one-qubit reduced density matrix transforms as follows
\begin{equation*}
\rho_i(U\phi)=U_i\rho_i(\phi)U_i^\dagger=U_i\rho_i(\phi)U_i^{-1}.
\end{equation*}
Matrices from $U$ can be chosen to diagonalize each $\rho_i(\phi)$, hence each $K$-orbit in $\mathbb{P}(\mathcal{H})$ corresponds to one point in the polytope $\Psi(\mathbb{P}(\mathcal{H}))$.
Thus the critical reduced one-qubit density matrices correspond to some subset of points in $\Psi(\mathbb{P}(\mathcal{H}))$. In fact, this subset contains a finite number of points \cite{K84-1}. As was proposed in \cite{SOK12, SOK14} its determination can be reduced to the following procedure

\paragraph{The method for finding critical reduced one-qubit density matrices
as proposed in \cite{SOK12, SOK14}}
\begin{enumerate}
\item For $\alpha\in\Psi(\mathbb{P}(\mathcal{H}))$, where $\alpha=\{\alpha_{1},\ldots,\alpha_{L}\}$
construct the operator 
\begin{eqnarray*}
\alpha=\alpha_{1}\otimes\mathbbm{1}\otimes\ldots\otimes\mathbbm{1}+\ldots+I\otimes\ldots\otimes\mathbbm{1}\otimes\alpha_{L},
\end{eqnarray*}
 where $\alpha_{i}=\mathrm{diag}(-\lambda_{i},\lambda_{i})$, $\lambda_{i}\in[0,\frac{1}{2}]$
and inequalities (\ref{nier1}) are satisfied.
\item Find the eigenspaces of operator $\alpha$
\item For each eigenspace verify if it contains state $\phi$ such that
$\mu(\phi)=\alpha$. If this is the case then $\alpha\in\Psi(\mathbb{P}(\mathcal{H}))$
is critical.
\end{enumerate}
In \cite{SOK12, SOK14} the algorithm was applied to find the critical points
up to three qubits. For two qubits there are only two critical points,
i.e. the GHZ and the separable states. For three qubits we have the GHZ, separable,
three bi-separable and the W states (see table 1). In the case of four qubits there are already nine critical sets of one-qubit density matrices. They are listed in table \ref{tab:4qubit_old} with exemplary critical states. As pointed out in \cite{SOK12, SOK14} they are in 1-1 correspondence
with SLOCC classes. 

\begin{table}[h]
\centering
\begin{tabular}{|c|c|c|}
\hline 
Critical $\alpha\in\Psi(\mathbb{P}(\mathcal{H}))$ & State & E($\phi$)\\
\hline 
\hline 
$\left(\begin{array}{cc}
-\frac{1}{2} & 0\\
0 & \frac{1}{2}
\end{array}\right),\,\left(\begin{array}{cc}
-\frac{1}{2} & 0\\
0 & \frac{1}{2}
\end{array}\right),\,\left(\begin{array}{cc}
-\frac{1}{2} & 0\\
0 & \frac{1}{2}
\end{array}\right)$  & Sep & 0\\
\hline 
$\left(\begin{array}{cc}
-\frac{1}{2} & 0\\
0 & \frac{1}{2}
\end{array}\right),\,\left(\begin{array}{cc}
0 & 0\\
0 & 0
\end{array}\right),\,\left(\begin{array}{cc}
0 & 0\\
0 & 0
\end{array}\right)$ &  BiSep & $\frac{1}{3}$\\
\hline 
$\left(\begin{array}{cc}
-\frac{1}{6} & 0\\
0 & \frac{1}{6}
\end{array}\right),\,\left(\begin{array}{cc}
-\frac{1}{6} & 0\\
0 & \frac{1}{6}
\end{array}\right),\,\left(\begin{array}{cc}
-\frac{1}{6} & 0\\
0 & \frac{1}{6}
\end{array}\right)$ &  W & $\frac{4}{9}$\\
\hline 
$\left(\begin{array}{cc}
0 & 0\\
0 & 0
\end{array}\right),\,\left(\begin{array}{cc}
0 & 0\\
0 & 0
\end{array}\right),\,\left(\begin{array}{cc}
0 & 0\\
0 & 0
\end{array}\right)$ &  GHZ & $\frac{1}{2}$\\
\hline 
\end{tabular}
\caption{Three qubits critical one-qubit matrices. The listed states are $\ket{\textrm{BiSep}}=\frac{1}{\sqrt{2}}\ket{1}\otimes (\ket{00}+\ket{11})$, $\ket{\textrm{W}}=\frac{1}{\sqrt{3}}(\ket{110}+\ket{101}+\ket{011})$ and $\ket{\textrm{GHZ}}=\frac{1}{\sqrt{2}}(\ket{000}+\ket{111})$.}
\label{tab:3qubit_old}
\end{table}

\begin{table}[h]
\centering
\begin{tabular}{|c|c|c|}
\hline 
Critical $\alpha\in\Psi(\mathbb{P}(\mathcal{H}))$ & State & E($\phi$)\\
\hline 
\hline 
$\left(\begin{array}{cc}
-\frac{1}{2} & 0\\
0 & \frac{1}{2}
\end{array}\right),\,\left(\begin{array}{cc}
-\frac{1}{2} & 0\\
0 & \frac{1}{2}
\end{array}\right),\,\left(\begin{array}{cc}
-\frac{1}{2} & 0\\
0 & \frac{1}{2}
\end{array}\right)$  & Sep & 0\\
\hline 
$\left(\begin{array}{cc}
-\frac{1}{2} & 0\\
0 & \frac{1}{2}
\end{array}\right),\,\left(\begin{array}{cc}
-\frac{1}{2} & 0\\
0 & \frac{1}{2}
\end{array}\right),\,\left(\begin{array}{cc}
0 & 0\\
0 & 0
\end{array}\right),\,\left(\begin{array}{cc}
0 & 0\\
0 & 0
\end{array}\right)$ &  TriSep & $\frac{1}{4}$\\
\hline 
$\left(\begin{array}{cc}
-\frac{1}{6} & 0\\
0 & \frac{1}{6}
\end{array}\right),\,\left(\begin{array}{cc}
-\frac{1}{6} & 0\\
0 & \frac{1}{6}
\end{array}\right),\,\left(\begin{array}{cc}
-\frac{1}{6} & 0\\
0 & \frac{1}{6}
\end{array}\right),\,\left(\begin{array}{cc}
-\frac{1}{2} & 0\\
0 & \frac{1}{2}
\end{array}\right)$ & $\ket{\textrm{W}^{(3)}}\otimes\ket{1}$ & $\frac{1}{3}$\\
\hline 
$\left(\begin{array}{cc}
-\frac{1}{2} & 0\\
0 & \frac{1}{2}
\end{array}\right),\,\left(\begin{array}{cc}
0 & 0\\
0 & 0
\end{array}\right),\,\left(\begin{array}{cc}
0 & 0\\
0 & 0
\end{array}\right),\,\left(\begin{array}{cc}
0 & 0\\
0 & 0
\end{array}\right)$ &  BiSep & $\frac{3}{8}$\\
\hline 
$\left(\begin{array}{cc}
-\frac{1}{4} & 0\\
0 & \frac{1}{4}
\end{array}\right),\,\left(\begin{array}{cc}
-\frac{1}{4} & 0\\
0 & \frac{1}{4}
\end{array}\right),\,\left(\begin{array}{cc}
-\frac{1}{4} & 0\\
0 & \frac{1}{4}
\end{array}\right),\,\left(\begin{array}{cc}
-\frac{1}{4} & 0\\
0 & \frac{1}{4}
\end{array}\right)$ &  W & $\frac{3}{8}$\\
\hline 
$\left(\begin{array}{cc}
-\frac{1}{10} & 0\\
0 & \frac{1}{10}
\end{array}\right),\,\left(\begin{array}{cc}
-\frac{1}{10} & 0\\
0 & \frac{1}{10}
\end{array}\right),\,\left(\begin{array}{cc}
-\frac{1}{5} & 0\\
0 & \frac{1}{5}
\end{array}\right),\,\left(\begin{array}{cc}
-\frac{1}{5} & 0\\
0 & \frac{1}{5}
\end{array}\right)$ &  $\Phi_3$ & $\frac{9}{20}$\\
\hline 
$\left(\begin{array}{cc}
-\frac{1}{6} & 0\\
0 & \frac{1}{6}
\end{array}\right),\,\left(\begin{array}{cc}
-\frac{1}{6} & 0\\
0 & \frac{1}{6}
\end{array}\right),\,\left(\begin{array}{cc}
-\frac{1}{6} & 0\\
0 & \frac{1}{6}
\end{array}\right),\,\left(\begin{array}{cc}
0 & 0\\
0 & 0
\end{array}\right)$ & $\Phi_2$ & $\frac{11}{24}$\\
\hline 
$\left(\begin{array}{cc}
-\frac{1}{14} & 0\\
0 & \frac{1}{14}
\end{array}\right),\,\left(\begin{array}{cc}
-\frac{1}{14} & 0\\
0 & \frac{1}{14}
\end{array}\right),\,\left(\begin{array}{cc}
-\frac{1}{14} & 0\\
0 & \frac{1}{14}
\end{array}\right),\,\left(\begin{array}{cc}
-\frac{1}{7} & 0\\
0 & \frac{1}{7}
\end{array}\right)$ &  $\Phi_1$ & $\frac{27}{56}$\\
\hline 
$\left(\begin{array}{cc}
0 & 0\\
0 & 0
\end{array}\right),\,\left(\begin{array}{cc}
0 & 0\\
0 & 0
\end{array}\right),\,\left(\begin{array}{cc}
0 & 0\\
0 & 0
\end{array}\right),\,\left(\begin{array}{cc}
0 & 0\\
0 & 0
\end{array}\right)$ &  GHZ & $\frac{1}{2}$\\
\hline 
\end{tabular}
\caption{Four qubits critical one-qubit matrices. The listed states are $\ket{\textrm{TriSep}}=\frac{1}{\sqrt{2}}\ket{11}\otimes (\ket{00}+\ket{11})$, $\ket{\textrm{BiSep}}=\frac{1}{\sqrt{2}}\ket{1}\otimes (\ket{000}+\ket{111})$, $\ket{\textrm{W}}=\frac{1}{2}(\ket{1110}+\ket{1101}+\ket{1011}+\ket{0111})$, $\ket{\Phi_3}=\sqrt{\frac{3}{10}}(\ket{1101}+\ket{1110})+\sqrt{\frac{2}{5}}\ket{0011}$,  $\ket{\Phi_2}=\frac{1}{2\sqrt{3}}(\ket{1011}+\ket{1110})-\frac{1}{2}(\ket{0101}+\ket{0011})+\frac{1}{\sqrt{3}}\ket{0110}$, $\ket{\Phi_1}=\sqrt{\frac{3}{14}}(\ket{0011}+\ket{0101}+\ket{1001})+\sqrt{\frac{5}{14}}\ket{1110}$, $\ket{\textrm{GHZ}}=\frac{1}{\sqrt{2}}(\ket{0000}+\ket{1111})$.}
\label{tab:4qubit_old}
\end{table}

\noindent Finally, note that the considered problem has a permutation symmetry.
More precisely, if 
\begin{eqnarray*}
\alpha=\{\alpha_{1},\alpha_{2},\ldots,\alpha_{L}\}\in\Psi(\mathbb{P}(\mathcal{H})),
\end{eqnarray*}
$\alpha_{i}=\mathrm{diag}(-\lambda_{i},\lambda_{i})$ is critical
then any permutation of its $\lambda_{i}$s gives another critical
$\alpha^{\prime}\in\Psi(\mathbb{P}(\mathcal{H}))$. The main problem
with finding critical points using the method outlined above steams
form the need of considering all possible degeneracies in the spectrum
of $\alpha$. In the next sections we described the improved algorithm which is easier
to handle as it preselects a finite set of $\alpha$'s which have chance to be critical points.   

\section{Abelian {\it vs} nonabelian\label{crit-mu}}

The maps $\mu$ and $\Psi$ from section \ref{sec:The-linear-entropy} have a nice geometric
interpretation (see \cite{SHK} for detailed discussion). The complex projective space $\mathbb{P}(\mathcal{H})$ is a symplectic manifold. The map $\mu$ is the momentum map for the symplectic action of $K$ on $\mathbb{P}(\mathcal{H})$. The image $\Psi(\mathbb{P}(\mathcal{H}))$ is called the momentum polytope and its convexity is a direct consequence of the celebrated convexity property of the momentum map \cite{K84}. Using the $K$-invariant scalar product of matrices given by $(A,B)=\Tr(AB^\dagger)$, one can calculate the squared norm of $\mu$
\begin{equation*}
||\mu(\phi)||^2=\sum_{i=1}^L \Tr\left[\left(\rho_i(\phi)-\frac{1}{2}\mathbbm{1}\right)^2\right]=-\frac{L}{2}+\sum_{i=1}^L \Tr\left[\left(\rho_i(\phi)\right)^2\right].
\end{equation*}
Comparing this result with formula (\ref{eq:linear-entropy}), one can see that the linear entropy is in fact (up to some irrelevant constants) given by $||\mu||^2$ and therefore critical sets of $\|\mu\|^2$ and $E(\cdot)$ coincide. The purpose of the following is to summarize results of \cite{K84-1} where critical points of $||\mu||^2$ were studied for an arbitrary compact semisimple $K$. In particular we show how the problem gets simplified if one considers the action of maximal torus of $K$.

Suppose that a compact semisimple Lie group $K$ acts on a compact
symplectic manifold $M$. Let $\mu:M\rightarrow\mathfrak{k}$
be the momentum map for this action, where $\mathfrak{k}$ is the Lie algebra of $K$. To simplify the notation, the action of $g\in K$ on $x\in M$ will be denote by $g.x$. As shown in \cite{K84-1} (see also \cite{SOK12, SOK14}), the function
$f:=||\mu||^{2}:M\rightarrow\mathbb{R}$ is $K$-invariant, i.e. $||\mu(k.x)||^{2}=||\mu(x)||^{2},\ k\in K$ and
$x$ is a critical point of $f$ iff 
\begin{equation}\label{crit}
\widehat{\mu(x)}_{x}:=\frac{d}{dt}\big|_{t=0}e^{\mu(x)t}.x=0,
\end{equation}
that is, when the fundamental vector field generated by $\mu(x)\in \mathfrak{k}$ and evaluated at point
$x$ vanishes. 

Let $T$ be a maximal torus of $K$ and $\mathfrak{t}$ its Lie algebra. Consider the action of $T$ on $M$ and a corresponding momentum map $\mu_{T}:M\rightarrow \mathfrak{t}$. As $T$ is not semisimple  $\mu_T$ is not unique. One can, however, chose $\mu_T$ to be the composition $\mu_{T}:M\rightarrow\mathfrak{k}\rightarrow\mathfrak{t}$ of the unique momentum map for $K$-action on $M$, $\mu$, with the restriction map $\mathfrak{k}\rightarrow\mathfrak{t}$. The restriction map $\mathfrak{k}\rightarrow\mathfrak{t}$ is an orthogonal projection with respect to the $K$-invariant inner product on $\mathfrak{k}$. Therefore, 
\begin{equation}\label{mu-muT}
\mu(x)=\mu_{T}(x)+\alpha(x),
\end{equation}
where $\alpha(x)\in\mathfrak{t}^{\perp}$. In the following we will consider this $\mu_T$ only. For our choice of $\mu_T$ we can define $f_T:=||\mu_T||^2$. Then $x$ is a critical point of $f_T$ if and only if
\begin{equation}\label{crit-T}
\widehat{\mu_T(x)}_{x}:=\frac{d}{dt}\big|_{t=0}e^{\mu_T(x)t}.x=0,
\end{equation}
that is, when the fundamental vector field generated by $\mu_T(x)\in \mathfrak{t}$ and evaluated at point
$x$ vanishes. In the following subsection we give complete characterization of critical points of $f_T$. Then in subsection \ref{sec:crit-mu} we explain how the knowledge of critical points of $f_T$ simplifies the problem of finding critical points of $f$.
\subsection{Critical points of $\|\mu_T\|^2$}
We start by invoking the Atiyah theorem which gives characterization of $\mu_T(M)$ \cite{atiyah}.
\begin{theorem}\label{Atiyah-thm}
Let $M$ be a symplectic manifold acted upon in a symplectic way by a torus $T$. Let $M_T\subset M$ be the set of $T$-fixed points, i.e. points for which $\forall t\in T\,t.x=x$. The image $\mathbb{A}:=\mu_T(M_T)$ consist of the finite number of points, called weights, and the set $\mu_T(M)=\mathrm{conv}(\mu_T(M_T))$, i.e.  is the convex hull of a finite set of points in $\mathfrak{t}$ that are the image under $\mu_T$ of the fixed points of $T$-action. 
\end{theorem}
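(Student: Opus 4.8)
The plan is to prove the two assertions separately: first that $\mathbb{A}=\mu_T(M_T)$ is finite, and then that $\mu_T(M)=\mathrm{conv}(\mathbb{A})$, the latter being the substantive part. Finiteness is the easy half. For any $\xi\in\mathfrak{t}$ write $\mu^\xi:=\langle\mu_T(\cdot),\xi\rangle$ for the corresponding component; by the defining property of the momentum map $d\mu^\xi=\iota_{X_\xi}\omega$, where $X_\xi$ is the fundamental vector field generated by $\xi$ and $\omega$ is the symplectic form on $M$. At a point of $M_T$ every $X_\xi$ vanishes, so $d\mu^\xi=0$ there for all $\xi$, and hence $\mu_T$ is locally constant along $M_T$. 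Since $M$ is compact, the fixed-point set $M_T$ is a closed submanifold with finitely many connected components, on each of which $\mu_T$ is constant; therefore $\mathbb{A}$ is a finite set of points, which are the weights.

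The heart of the argument is to analyze each component $\mu^\xi$ as a Morse--Bott function. First I would show that its critical set is exactly the fixed-point set of the subtorus $T_\xi:=\overline{\exp(\mathbb{R}\xi)}$, since $d\mu^\xi=\iota_{X_\xi}\omega=0$ is equivalent to $X_\xi=0$ by nondegeneracy of $\omega$. Then, linearizing the $T_\xi$-action near a fixed point and invoking the equivariant Darboux normal form, the normal space to the critical manifold splits into two-dimensional symplectic weight spaces on which $T_\xi$ acts by rotation; the transverse Hessian of $\mu^\xi$ is nondegenerate and its positive and negative eigenspaces are unions of these planes. Consequently every critical submanifold of $\mu^\xi$ has even index and even coindex. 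By standard Morse--Bott theory, a function on a compact connected manifold all of whose critical submanifolds have index and coindex different from $1$ has connected sublevel and superlevel sets and, crucially, connected fibers $\left(\mu^\xi\right)^{-1}(c)$.

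With fiber-connectedness in hand, convexity of $\mu_T(M)$ follows by induction on $\dim T$. For $\dim T=1$ connectedness of $M$ already gives that the image is an interval. For the inductive step I would split $\mathfrak{t}=\mathfrak{t}'\oplus\mathbb{R}\xi$, so that $\mu_T=(\mu_{T'},\mu^\xi)$, assume by induction that $\mu_{T'}$ has convex image and connected fibers, and use the connectedness of the level sets to interpolate: given two points of $\mu_T(M)$, connectedness of the relevant fibers lets one join their preimages while controlling both the $\mathfrak{t}'$-coordinate and the $\xi$-coordinate, forcing the whole segment between them to lie in the image. This is Atiyah's inductive argument, and it simultaneously propagates both convexity and fiber-connectedness. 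Together with the finiteness of $\mathbb{A}$ this yields $\mathrm{conv}(\mathbb{A})\subseteq\mu_T(M)$.

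The reverse inclusion $\mu_T(M)\subseteq\mathrm{conv}(\mathbb{A})$ I would obtain by a supporting-hyperplane argument. For $\xi$ generic, so that $T_\xi=T$ and the critical set of $\mu^\xi$ equals $M_T$, the maximum of $\mu^\xi$ over the compact $M$ is attained at a point of $M_T$, whence $\max_M\mu^\xi=\max_{M_T}\mu^\xi$; letting $\xi$ range over the dense set of such directions shows that $\mu_T(M)$ lies in every closed half-space containing $\mathbb{A}$, i.e. in $\mathrm{conv}(\mathbb{A})$. The main obstacle is the convexity half, and within it the key technical input is the connectedness of the fibers of the component functions: establishing that the even-index Morse--Bott structure genuinely forces connected fibers, rather than merely connected sublevel sets, and then propagating this through the induction on $\dim T$, is where the real work lies. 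By contrast, the linearization and the evenness of the indices are routine once the equivariant Darboux theorem is invoked.
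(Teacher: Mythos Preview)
The paper does not actually prove this theorem: it simply invokes it as Atiyah's convexity theorem, citing \cite{atiyah}, and uses it as a black box. So there is no ``paper's own proof'' to compare against.

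Your outline is essentially Atiyah's original argument from \cite{atiyah}, and it is correct in its main lines: the finiteness of $\mathbb{A}$ from local constancy of $\mu_T$ on the (finitely many) components of $M_T$; the Morse--Bott analysis of the component functions $\mu^\xi$ via the equivariant Darboux normal form giving even index and coindex; the resulting connectedness of fibers; the induction on $\dim T$ for convexity; and the generic-$\xi$ supporting-hyperplane argument for the containment $\mu_T(M)\subseteq\mathrm{conv}(\mathbb{A})$. One caveat: the theorem as stated in the paper omits the hypothesis that $M$ is compact (and connected), which you implicitly use throughout---without it neither finiteness of $\mathbb{A}$ nor the Morse-theoretic connectedness arguments go through as stated. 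You should flag that you are adding this standing assumption, which the paper itself does impose elsewhere (it works on $M=\mathbb{P}(\mathcal{H})$).
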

Note that by (\ref{crit-T}) one can equivalently say that $x\in M$ is a critical point of $f_T$ iff it is a fixed point of the action of $T_\beta:=\{e^{\beta t}:t\in \mathbb{R}\}\subset T$, where $\beta = \mu_T(x)$.  Following \cite{K84-1} we define:
\begin{definition} For any $\beta\in\mathfrak{t}$, let $Z_{\beta}$
be the union of those connected components of the $T_\beta$-fixed points 
for which  
\begin{equation}\label{perpen}
\langle\mu_{T}(x),\beta\rangle=\langle\beta,\beta\rangle,\,\, x\in Z_\beta.
\end{equation}
\end{definition} 
One can write (\ref{perpen}) as 
\begin{equation}
\langle\mu_{T}(x)-\beta,\beta\rangle=0,\,\, x\in Z_\beta.
\end{equation}
Therefore, all points that belong to $Z_{\beta}$ are fixed by the action of $T_{\beta}$
and mapped by $\mu_{T}$ to the hyperplane in $\mathfrak{t}$
containing $\beta$ and perpendicular to the line from $\beta$ to
the origin. One can easily see that $Z_{\beta}$ is invariant under action of $T$. Indeed, if $x\in Z_\beta$ and $t\in T$ then for any $\tilde{t}\in T_\beta$ we have $\tilde{t}t.x=t\tilde{t}.x=tx$, i.e. $t.x$ is fixed point of $T_\beta$. Moreover 
\begin{equation}
\fl\langle\mu_{T}(t.x)-\beta,\beta\rangle=\langle \mathrm{Ad}_t\mu_T(x)-\beta,\beta\rangle=\langle\mu_{T}(x)-\beta,\beta\rangle=0,\,\, x\in Z_\beta,
\end{equation}
as $T$ commutes with  $\mathfrak{t}$. Actually, $Z_{\beta}$ is a $T$-invariant symplectic submanifold of
$M$ \cite{K84-1}.  Thus by theorem \ref{Atiyah-thm} the image $\mu_T(Z_\beta)$  is the convex hull of the image under
$\mu_{T}$ of the fixed point set of $T$-action on $Z_{\beta}$, i.e. is a subset of the finite set $\mathbb{A}$. Summing up we have \cite{K84-1}:
\begin{theorem} \label{critical_lemma} Let $\beta=\mu_T(x)$ for some $x\in M$. Then $x$ is a critical point of $f_T$  if and only if $x\in Z_\beta$. If this is the case then $\beta$ is the closest point to 0 of the convex hull of some nonempty subset of the set $\mathbb{A}$.
\end{theorem}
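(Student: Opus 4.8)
The plan is to treat the two assertions separately: the equivalence in the first sentence follows almost tautologically from the definitions, while the closest-point characterization is obtained by applying Atiyah's theorem (Theorem \ref{Atiyah-thm}) to the submanifold $Z_\beta$.

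First I would establish the equivalence ``$x$ critical $\iff x\in Z_\beta$''. Writing $\beta=\mu_T(x)$, criticality (\ref{crit-T}) means exactly that $\widehat{\beta}_x=0$, i.e.\ that $x$ is fixed by the one-parameter subgroup $T_\beta$. The only additional content in the definition of $Z_\beta$ is the hyperplane condition (\ref{perpen}), which for our $\beta=\mu_T(x)$ reads $\langle\mu_T(x),\mu_T(x)\rangle=\langle\mu_T(x),\mu_T(x)\rangle$ and is therefore automatic. To make this rigorous at the level of connected components I would first observe that on each connected component of the $T_\beta$-fixed set the function $\langle\mu_T(\cdot),\beta\rangle$ is constant: by the defining property of the momentum map its differential is $\iota_{\widehat{\beta}}\omega$ (with $\omega$ the symplectic form), and the fundamental vector field of $\beta$ vanishes identically on the fixed locus. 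Hence (\ref{perpen}) holding at the single point $x$ forces it on the whole component through $x$, so $x\in Z_\beta$; conversely, membership in $Z_\beta$ in particular requires $x$ to be $T_\beta$-fixed, which returns criticality.

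For the second assertion I would use that $Z_\beta$ is a $T$-invariant \emph{symplectic} submanifold of $M$, as quoted from \cite{K84-1}. Applying Theorem \ref{Atiyah-thm} to the $T$-action on $Z_\beta$ yields $\mu_T(Z_\beta)=\mathrm{conv}(S)$, where $S:=\mu_T\big((Z_\beta)_T\big)$ is the image of the $T$-fixed points of $Z_\beta$. Since a $T$-fixed point of $Z_\beta$ is a fortiori a $T$-fixed point of $M$, we have $S\subseteq\mathbb{A}$, and $S$ is nonempty because $Z_\beta$ is a nonempty compact symplectic $T$-manifold, so its momentum image is a nonempty polytope and fixed points exist. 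Moreover $\beta=\mu_T(x)\in\mu_T(Z_\beta)=\mathrm{conv}(S)$.

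It remains to identify $\beta$ as the point of $\mathrm{conv}(S)$ nearest to the origin. The perpendicularity condition (\ref{perpen}) says precisely that $\mu_T(Z_\beta)$, hence $\mathrm{conv}(S)$, lies in the affine hyperplane $H=\{v\in\mathfrak{t}:\langle v,\beta\rangle=\langle\beta,\beta\rangle\}$, which is orthogonal to $\beta$ and passes through $\beta$. Thus $\beta$ is the foot of the perpendicular from $0$ to $H$, i.e.\ the point of $H$ closest to $0$; since $\mathrm{conv}(S)\subseteq H$ and $\beta\in\mathrm{conv}(S)$, the point $\beta$ is automatically closest to $0$ within $\mathrm{conv}(S)$ as well. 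As the genuinely hard structural input --- that $Z_\beta$ is a symplectic submanifold --- is quoted from \cite{K84-1}, I expect the main remaining care to go into the constancy of $\langle\mu_T,\beta\rangle$ on fixed-point components, which both makes $Z_\beta$ well defined and places a critical $x$ inside it; the closest-point conclusion is then a one-line consequence of elementary hyperplane geometry.
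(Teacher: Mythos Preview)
Your argument is correct and follows the same route as the paper's discussion preceding the theorem: identify criticality with being $T_\beta$-fixed, use that $Z_\beta$ is a $T$-invariant symplectic submanifold so Atiyah's theorem applies, and read off the closest-point statement from the hyperplane condition~(\ref{perpen}). You are in fact more explicit than the paper, which leaves the ``closest to $0$'' conclusion implicit and cites \cite{K84-1}; the only minor caution is that Atiyah's convexity theorem is stated for connected manifolds, so strictly one should apply it to the connected component of $Z_\beta$ through $x$ rather than to all of $Z_\beta$ at once---but this changes nothing in the conclusion.
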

We will call $\beta\in \mathfrak{t}$  the {\it minimal combination of weights} iff it is a closest point to the origin of the convex hull in $\mathfrak{t}$ of some nonempty subset of the set of weights $\mathbb{A}$.

\subsection{Critical points of $\|\mu\|^2$}\label{sec:crit-mu}

Let $\mathfrak{t}_+$ be a positive Weyl chamber in $\mathfrak{t}$. $K$-orbits have the following property.
\begin{lemma} Let $K.y$ be an orbit of the $K$-action through  $y\in M$. Then there is $x\in K.y$ such that $\mu(x)\in \mathfrak{t}_{+}$.
\end{lemma}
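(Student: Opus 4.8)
The plan is to deduce the statement directly from the equivariance of the momentum map, which reduces it to the purely Lie-theoretic fact that every adjoint orbit in $\mathfrak{k}$ meets the positive Weyl chamber. First I would recall the equivariance property $\mu(k.z)=\mathrm{Ad}_k\,\mu(z)$, valid for every $k\in K$ and $z\in M$; this is the same identity already used above for $\mu_T$ under the $T$-action, where one writes $\mu_T(t.x)=\mathrm{Ad}_t\,\mu_T(x)$. For the full group it says that as $x$ ranges over the orbit $K.y$, the vector $\mu(x)$ ranges precisely over the adjoint orbit $\mathrm{Ad}_K\,\mu(y)=\{\mathrm{Ad}_k\,\mu(y):k\in K\}\subset\mathfrak{k}$. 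Hence producing $x\in K.y$ with $\mu(x)\in\mathfrak{t}_+$ is equivalent to finding $k\in K$ with $\mathrm{Ad}_k\,\mu(y)\in\mathfrak{t}_+$; once such a $k$ is available, $x:=k.y$ is the required point.

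The remaining step is the classical structure theory of compact semisimple groups. I would invoke the standard result that every element $\xi\in\mathfrak{k}$ is $\mathrm{Ad}$-conjugate to a point of the closed positive Weyl chamber $\overline{\mathfrak{t}_+}$. This holds because, $K$ being compact, $\xi$ lies in some maximal abelian subalgebra, and all such subalgebras (the Lie algebras of the maximal tori) are $\mathrm{Ad}_K$-conjugate, so $\xi$ is conjugate to an element of $\mathfrak{t}$. The Weyl group $W=N_K(T)/T$ then acts on $\mathfrak{t}$ with $\overline{\mathfrak{t}_+}$ as a fundamental domain, so the $W$-orbit of that element — and therefore the full adjoint orbit of $\xi$ — meets $\overline{\mathfrak{t}_+}$. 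Applying this with $\xi=\mu(y)$ yields the desired $k$, and hence $x$.

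The argument has essentially no hard step of its own: the only real content is the classical fact just quoted, which I would cite rather than reprove. The single point requiring care is the boundary behaviour, namely that in general the conjugate lands in the \emph{closed} chamber $\overline{\mathfrak{t}_+}$ — the intersection of an adjoint orbit with $\overline{\mathfrak{t}_+}$ being a single point — and it lies in the open chamber precisely when $\mu(y)$ is regular. I would therefore read the symbol $\mathfrak{t}_+$ in the statement as the closed positive Weyl chamber, which is the natural convention in this context.
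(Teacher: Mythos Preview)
Your proposal is correct and follows essentially the same approach as the paper: invoke the equivariance $\mu(k.z)=\mathrm{Ad}_k\mu(z)$ to identify $\mu(K.y)$ with an adjoint orbit in $\mathfrak{k}$, then cite the standard fact that every adjoint orbit meets $\mathfrak{t}_+$ in exactly one point. The paper simply cites this Lie-theoretic fact without the additional explanation you provide about maximal tori and the Weyl group, and your remark that $\mathfrak{t}_+$ should be read as the closed chamber is a valid clarification.
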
\label{chamber}
\begin{proof}
By the equivariance property of the momentum map, that is $\mu(g.x)=\mathrm{Ad}_g\mu(x)$, where $\mathrm{Ad}_g$ stands for the adjoint action of $g\in K$ on $\mathfrak{k}$, a K-orbit in $M$ is mapped onto an adjoint orbit in $\mathfrak{k}$. But every adjoint orbit intersects $\mathfrak{t}_{+}$ in exactly one point \cite{DK00}, which in turn means that there exists $x\in K.y$ such that $\mu(x)\in\mathfrak{t}_{+}$.
\end{proof}
Recall that critical points of $\|\mu\|^2$ are given by some of $K$-orbits in $M$. By lemma \ref{chamber} every $K$-orbit contains a point such that $\mu(x)\in \mathfrak{t}_{+}$. Combining these two, in order to find critical points of $\|\mu\|^2$ it is enough to investigate points $x\in M$ such that $\mu(x)\in\mathfrak{k}_{+}\subset \mathfrak{t}$ (the remaining critical points are obtained by the action of $K$). Let $M(\mathfrak{t}_{+})=\{x\in M:\,\mu(x)\in \mathfrak{t}_{+}\}$. Note that by (\ref{mu-muT}) for $x\in M(\mathfrak{t}_{+})$ we have $\mu(x)=\mu_T(x)$. By (\ref{crit}) and (\ref{crit-T}) this means that $x\in M(\mathfrak{t}_{+})$ is a critical point of $\|\mu\|^2$ if and only if it is a critical point of $\|\mu_T\|^2$.  But we already know how to check if $x$ is a critical point of $\mu_T$. By theorem \ref{critical_lemma} it is the case if and only if $x\in Z_\beta$, where $\beta=\mu(x)$ is a minimal combination of weights. Let us denote by $\mathcal{B}$ the set of all minimal combinations of weights that belong to $\mathfrak{t}_+$. For $\beta\in \mathcal{B}$ we define $C_\beta=K.\big(Z_\beta\cap\mu^{-1}(\beta)\big )$. Note that $\mu(C_\beta)$ is exactly the adjoint orbit of $K$ through $\beta$. The following is now clear:
\begin{theorem}
The critical set of $|\mu\|^2$ is a disjoint union of sets $C_\beta$, where $\beta$ runs over the finite set $\mathcal{B}$. 
\end{theorem}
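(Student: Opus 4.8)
The plan is to reduce the whole analysis to the subset $M(\mathfrak{t}_+)=\{x\in M:\,\mu(x)\in\mathfrak{t}_+\}$, where the critical-point problem for $\|\mu\|^2$ collapses to the one for $\|\mu_T\|^2$ that Theorem \ref{critical_lemma} already resolves, and then to transport the answer across all of $M$ by the $K$-action. Since $f=\|\mu\|^2$ is $K$-invariant, its critical set is a union of $K$-orbits, so it suffices to locate the critical points inside each $K$-orbit. By Lemma \ref{chamber} every $K$-orbit contains a representative $x$ with $\mu(x)\in\mathfrak{t}_+$, i.e. $x\in M(\mathfrak{t}_+)$, so I only need to understand the critical points lying in $M(\mathfrak{t}_+)$.

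First I would argue that on $M(\mathfrak{t}_+)$ the two momentum maps agree: by (\ref{mu-muT}) we have $\mu(x)=\mu_T(x)+\alpha(x)$ with $\alpha(x)\in\mathfrak{t}^\perp$, and since $\mu(x)\in\mathfrak{t}_+\subset\mathfrak{t}$ the component $\alpha(x)$ vanishes, giving $\mu(x)=\mu_T(x)$. Consequently the fundamental-vector-field conditions (\ref{crit}) and (\ref{crit-T}) are the same equation at such an $x$, so $x\in M(\mathfrak{t}_+)$ is critical for $f$ precisely when it is critical for $f_T$. Applying Theorem \ref{critical_lemma} with $\beta:=\mu(x)=\mu_T(x)$ then tells me that $x$ is critical if and only if $x\in Z_\beta$ and $\beta$ is a minimal combination of weights; since moreover $\beta\in\mathfrak{t}_+$, this $\beta$ belongs to $\mathcal{B}$. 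Thus the critical points of $f$ inside $M(\mathfrak{t}_+)$ are exactly $\bigcup_{\beta\in\mathcal{B}}\bigl(Z_\beta\cap\mu^{-1}(\beta)\bigr)$.

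Next I would spread this description over all of $M$ using $K$-invariance of the critical set together with Lemma \ref{chamber}. For the inclusion $\subseteq$: given any critical $y$, Lemma \ref{chamber} supplies $x=k.y\in M(\mathfrak{t}_+)$, still critical by $K$-invariance, hence $x\in Z_\beta\cap\mu^{-1}(\beta)$ for some $\beta\in\mathcal{B}$, and therefore $y\in K.x\subseteq C_\beta$. For $\supseteq$: any $y\in C_\beta$ has the form $y=k.x$ with $x\in Z_\beta\cap\mu^{-1}(\beta)$ critical, so $y$ is critical by $K$-invariance. This identifies the critical set of $f$ with $\bigcup_{\beta\in\mathcal{B}}C_\beta$, and $\mathcal{B}$ is finite because $\mathbb{A}$ is finite by Theorem \ref{Atiyah-thm} and there are only finitely many subsets of $\mathbb{A}$, each contributing a single closest point to the origin.

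The step I expect to be the main obstacle is disjointness, and here I would exploit the remark, already recorded in the text, that $\mu(C_\beta)$ equals the adjoint orbit of $K$ through $\beta$. If $z\in C_\beta\cap C_{\beta'}$, then $\mu(z)$ lies in both adjoint orbits, so the orbits through $\beta$ and $\beta'$ meet and hence coincide. But $\beta,\beta'\in\mathfrak{t}_+$, and every adjoint orbit meets the positive Weyl chamber in exactly one point (the fact from \cite{DK00} invoked already in Lemma \ref{chamber}); therefore $\beta=\beta'$. This shows the sets $C_\beta$ are pairwise disjoint, completing the identification of the critical set as their disjoint union.
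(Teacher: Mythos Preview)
Your argument is correct and is precisely the approach the paper intends: the paper assembles exactly these ingredients (Lemma \ref{chamber}, the identity $\mu=\mu_T$ on $M(\mathfrak{t}_+)$, Theorem \ref{critical_lemma}, and the observation that $\mu(C_\beta)$ is the adjoint orbit through $\beta$) and then declares the theorem ``clear'' without spelling out the verification you give. Your disjointness step via uniqueness of the Weyl-chamber intersection is the natural reading of that remark, so there is nothing substantively different here.
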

Note that for some $\beta\in\mathcal{B}$ the corresponding set $C_\beta$ might  be empty. After deletion of theses pathological $\beta$'s we arrive with set $\tilde{\mathcal{B}}$. In the next section we describe properties of $\tilde{\mathcal{B}}$ for $L$-qubit system. Recall that in this setting, points of $\tilde{\mathcal{B}}$ are given by spectra of the reduced one-qubit density matrices corresponding to the critical points of the linear entropy $E(\cdot)$.

\section{Critical points of the linear entropy for $L$ qubits}\label{crit-Lq}

Let $\mathcal{H}$ be the Hilbert space of $L$ distinguishable qubits, $\mathcal{H}=(\mathbb{C}^{2})^{\otimes L}$.  Let $\{\ket{0},\ket{1}\}$ be a basis in $\mathbb{C}^2$ and $\{\ket{i_1}\otimes\ket{i_2}\otimes \ldots\otimes\ket{i_L}:i_k\in\{0,1\}\}$ the corresponding basis of $\mathcal{H}$. The complex projective space  $M=\mathbb{P}(\mathcal{H})$ is a symplectic manifold. We consider the action of the local unitary operations, $K=SU(2)^{\times L}$ on $M$.
In this setting \cite{SHK}, the momentum map $\mu:M\rightarrow\mathfrak{k}$
assigns to an $L$-qubit state $[\psi]\in M$ the collection of its one-qubit reduced density matrices
\begin{equation}
\mu([\psi])=i\Big{[}\rho_{1}([\psi])-\frac{1}{2}\mathbbm{1},\rho_{2}([\psi])-\frac{1}{2}\mathbbm{1},\dots,\rho_{L}([\psi])-\frac{1}{2}\mathbbm{1}\Big{]}.
\end{equation}
The maximal torus $T\subset K$ consists of diagonal matrices of the
form 
\begin{equation}
\fl T=\Bigg{\{}\Bigg{[}\pmatrix{e^{-i\phi_{1}} & 0\cr
0 & e^{i\phi_{1}}
},\pmatrix{e^{-i\phi_{2}} & 0\cr
0 & e^{i\phi_{2}}
},\dots,\pmatrix{e^{-i\phi_{L}} & 0\cr
0 & e^{i\phi_{L}}
}\Bigg{]}:\, \phi_k\in \mathbb{R}\Bigg{\}}.
\end{equation}
The momentum map $\mu_{T}:M\rightarrow\mathfrak{t}$ assigns to a state
$[\psi]\in M$ a collection of $2\times2$ diagonal matrices whose
elements are given by the diagonal elements of $\rho_{i}([\psi])$'s.
\begin{eqnarray}
\mu_{T}([\psi])
=i\Bigg{[}\pmatrix{-\alpha_{1} & 0\cr
0 & \alpha_{1}
},\pmatrix{-\alpha_{2} & 0\cr
0 & \alpha_{2}
},\dots,\pmatrix{-\alpha_{L} & 0\cr
0 & \alpha_{L}
}\Bigg{]},\\  \alpha_{i}=\mathrm{Tr}\big{\{}\rho_{i}([\psi])\ \kb{i}{i}\big{\}}\in\mathbb{R},\nonumber 
\end{eqnarray}
Therefore, $\mu_{T}(M)$ can be identified with a subset
of $\mathbb{R}^{L}$. The following lemma gives the description
of this subset.

\begin{lemma} \label{qubits_image} For $L$ 
qubits $\mu_T(M)=\mathrm{conv}\mathbb{A}_L$ where 
\begin{equation}
\mathbb{A}_L=\{\overline{\alpha}\in\mathbb{R}^{L}:\,\overline{\alpha}=(\alpha_1,\ldots,\alpha_L),\,\alpha_{i}=\pm1/2\},
\end{equation}
i.e. is a hypercube spanned on $2^L$ vertices. 
\end{lemma}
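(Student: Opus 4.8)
The plan is to invoke the Atiyah convexity theorem (Theorem \ref{Atiyah-thm}) directly. Since $M=\mathbb{P}(\mathcal{H})$ is compact symplectic and $T$ acts on it symplectically, it suffices to identify the finite set of $T$-fixed points $M_T$ and to evaluate $\mu_T$ on them; the theorem then delivers $\mu_T(M)=\mathrm{conv}(\mu_T(M_T))$. Thus the entire argument reduces to showing that $\mu_T(M_T)=\mathbb{A}_L$.

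First I would determine $M_T$. The torus $T$ acts diagonally on $\mathcal{H}=(\mathbb{C}^2)^{\otimes L}$, and each product basis vector $\ket{i_1}\otimes\cdots\otimes\ket{i_L}$ with $i_k\in\{0,1\}$ is a joint eigenvector: for $t$ built from the blocks $\mathrm{diag}(e^{-i\phi_k},e^{i\phi_k})$ one picks up the phase $\exp\bigl(i\sum_k\epsilon_{i_k}\phi_k\bigr)$, where $\epsilon_0=-1$ and $\epsilon_1=+1$. Hence these vectors span the weight spaces of the $T$-representation. A point $[\psi]\in\mathbb{P}(\mathcal{H})$ is $T$-fixed exactly when the line $\mathbb{C}\psi$ is $T$-invariant, i.e. when $\psi$ lies in a single weight space. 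The decisive observation is that the $2^L$ weight functionals $\phi\mapsto\sum_k\epsilon_{i_k}\phi_k$ are pairwise distinct, because $(i_1,\dots,i_L)\mapsto(\epsilon_{i_1},\dots,\epsilon_{i_L})$ is injective. Therefore every weight space is one-dimensional, and $M_T$ is precisely the set of $2^L$ isolated points $[\ket{i_1}\otimes\cdots\otimes\ket{i_L}]$.

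Next I would compute $\mu_T$ on these fixed points. For a product state the $k$-th reduced density matrix is the rank-one projector $\kb{i_k}{i_k}$, whose diagonal is $(1,0)$ when $i_k=0$ and $(0,1)$ when $i_k=1$. Writing $\alpha_k=(\rho_k)_{11}-\frac12$ as in the definition of $\mu_T$, this gives $\alpha_k=-\frac12$ for $i_k=0$ and $\alpha_k=+\frac12$ for $i_k=1$. Hence, under the identification of $\mu_T(M)$ with a subset of $\mathbb{R}^L$ (which absorbs the overall factor $i$), the image of the fixed point labelled by $(i_1,\dots,i_L)$ is $(\alpha_1,\dots,\alpha_L)$ with every $\alpha_k=\pm\frac12$, and as $(i_1,\dots,i_L)$ ranges over $\{0,1\}^L$ these realize all $2^L$ sign patterns exactly once. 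So $\mu_T(M_T)=\mathbb{A}_L$, and Theorem \ref{Atiyah-thm} yields $\mu_T(M)=\mathrm{conv}(\mathbb{A}_L)$, the hypercube $[-\frac12,\frac12]^L$.

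The argument is short, and the only point demanding genuine care --- which I regard as the crux --- is the identification of $M_T$: one must check both that every $T$-fixed point of projective space is the class of an honest weight vector (rather than a point of some positive-dimensional $\mathbb{P}(V_\chi)$ fixed only up to a phase) and that the weights are nondegenerate. The distinctness of the $2^L$ weights settles both issues at once, after which the reduced-density-matrix computation and the appeal to the Atiyah theorem are entirely routine.
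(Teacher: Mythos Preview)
Your proof is correct and follows exactly the same approach as the paper: invoke the Atiyah convexity theorem, identify the $T$-fixed points as the product basis states $[\ket{i_1}\otimes\cdots\otimes\ket{i_L}]$, and evaluate $\mu_T$ on them. The paper compresses the identification of $M_T$ into the phrase ``it is easy to see,'' whereas you supply the details (distinctness of weights forcing one-dimensional weight spaces), but the strategy is identical.
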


\begin{proof}By theorem \ref{Atiyah-thm}, $\mu_T(M)$ is a convex hull of the image of $T$-fixed points. It is easy to see that only states $\ket{i_1}\otimes\ldots\otimes\ket{i_L}$, where $i_k\in\{0,1\}$ are $T$-fixed points. Applying $\mu_T$ to them the result follows.  \end{proof}

Therefore, the set $\mathcal{B}_L$ for $L$ qubits is given by the closest to zero
points of convex hulls of the hypercube's vertices. Note that it is
sufficient to consider $2,3,\ldots,L$-vertex subsets of it. This
follows from the fact that every $l$-dimensional convex polytope
can be divided into $l$-dimensional simplices that have disjoint interiors. The closest point to the origin of such a polytope belongs to its boundary. Therefore it belongs to the boundary of one of the $l$-dimensional simplices and as such can be written as a linear combination of $l$ vertices. Moreover it is also the closest point to the origin of the convex span of these $l$ vertices. In our case $l\leq L$ and the result follows.  

The above procedure of finding minimal combinations of weights for $L=3$ is presented in figure \ref{fig:3qubit}. Points $\mathrm{v}_{Bi}$ lie in the centers of hypercube's faces, point $\mathrm{v}_{GHZ}$ lies in the center of the whole hypercube and $\mu(\phi_W)$ is the closest to zero point of triangle marked with dotted lines.
\begin{figure}[h]
\centering
\includegraphics[width=.6\textwidth]{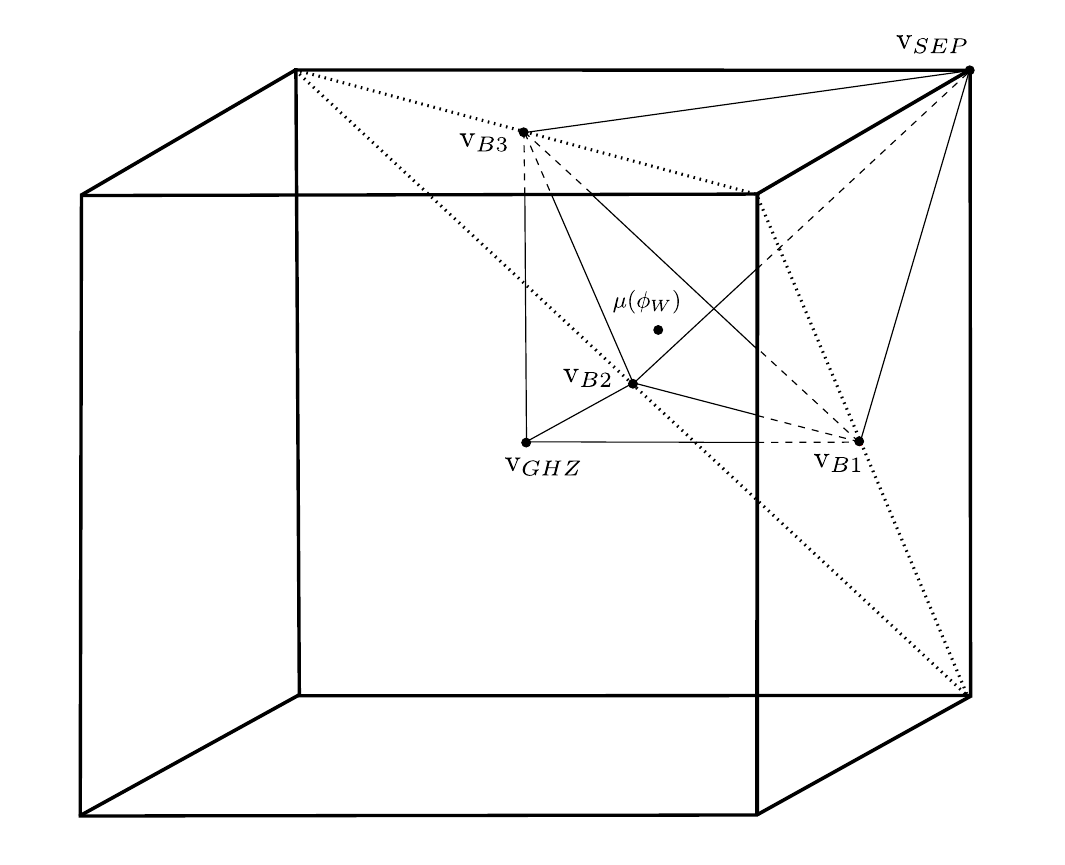}
\caption{The result of the procedure for finding minimal weight combinations for 3 qubits. Point $\mathrm{v}_{GHZ}$ is the image of $\ket{GHZ}=\frac{1}{\sqrt{2}}\left(\ket{000}+\ket{111}\right)$, points $\mathrm{v}_{Bi}$ correspond to biseparable states and state $\ket{\phi_W}=\frac{1}{\sqrt{3}}\left(\ket{110}+\ket{101}+\ket{011}\right)$. }
\label{fig:3qubit}
\end{figure}

Let us next consider the convex hull of the set of vertices $\overline{\alpha}^{(1)},\overline{\alpha}^{(2)},\dots,\overline{\alpha}^{(k)}$.
Any point from the convex hull satisfies 
\begin{equation}
\overline{\beta}=a_{1}\overline{\alpha}^{(1)}+a_{2}\overline{\alpha}^{(2)}+\dots+a_{k}\overline{\alpha}^{(k)},\ \ \sum_{i}a_{i}=1,\ \ a_{i}\in\mathbb{R}_{+}.
\end{equation}
Our goal is to minimize the norm of the vector $\overline{\beta}$ and this way find  $\overline{\beta}_{c}\in \mathcal{B}_L$ such that the following 
\begin{eqnarray}
||\overline{\beta}_{c}||^{2}=\min_{a_{1},\dots,a_{k}}||a_{1}\overline{\alpha}^{(1)}+a_{2}\overline{\alpha}^{(2)}+\dots+a_{k}\overline{\alpha}^{(k)}||^{2}=\\=\min_{a_{1},\dots,a_{k}}\sum_{i=1}^{L}\Big{(}a_{1}\alpha_{i}^{(1)}+\dots+a_{k}\alpha_{i}^{(k)}\Big{)}^{2},\label{minimization_num1}\\
\sum_{i}a_{i}=1,\ \ a_{i}\in\mathbb{R}_{+}.\label{constraints}
\end{eqnarray}
is satisfied. We apply the method
of Lagrange multipliers. To this end, let us consider the Lagrangian
\begin{equation}
\Lambda(a_{1},\dots,a_{k},\lambda):=\sum_{i=1}^{L}\Big{(}a_{1}\alpha_{i}^{(1)}+\dots+a_{k}\alpha_{i}^{(k)}\Big{)}^{2}+\lambda\Big{(}\sum_{i=1}^{k}a_{i}-1\Big{)}\label{lagrangian}
\end{equation}
The conditions for the set of parameters $a_{1},\dots,a_{k}$ to minimize
the norm of the vector $\overline{\beta}$ read 
\begin{equation}
\forall_{l}\ \ \frac{\partial\Lambda}{\partial a_{l}}=0.
\end{equation}
By differentiation of equation (\ref{lagrangian}) one obtains the set
of $L+1$ linear equations for parameters $a_{i}$ and $\lambda$
\begin{equation}
\forall_{l}\ \ 2\sum_{i=1}^{L}\Big{(}a_{1}\alpha_{i}^{(1)}+\dots+a_{k}\alpha_{i}^{(k)}\Big{)}\alpha_{i}^{(l)}+\lambda=0.\label{lin_eqs}
\end{equation}
Using the fact that $\beta=\sum_{i}a_{i}\overline{\alpha}^{(i)}$,
one can rewrite equation (\ref{lin_eqs}) as 
\begin{equation}
\forall_{l}\ \ \overline{\beta}_{c}\cdotp\overline{\alpha}^{(l)}=-\frac{\lambda}{2}.\label{lin1}
\end{equation}
Equation (\ref{lin1}) has a nice geometrical interpretation. As
all vectors $\overline{\alpha}^{(i)}$ are of the same length, 
(\ref{lin1})
means that cosines of the angles between every vector $\overline{\alpha}^{(i)}$
and vector $\overline{\beta}_{c}$ are the same. Equivalently 
\begin{equation}
||\overline{\beta}-\overline{\alpha}^{(1)}||=||\overline{\beta}-\overline{\alpha}^{(2)}||=\dots=||\overline{\beta}-\overline{\alpha}^{(k)}||.
\end{equation}
Equation (\ref{lin_eqs}) can be written as: 
\begin{equation}
\fl 2\sum_{i=1}^{L}\alpha_{i}^{(l)}\sum_{j=1}^{k}a_{j}\alpha_{i}^{(j)}+\lambda=2\sum_{j=1}^{k}a_{j}\Big{(}\sum_{i=1}^{L}\alpha_{i}^{(l)}\alpha_{i}^{(j)}\Big{)}+\lambda=2\sum_{j=1}^{k}a_{j}\overline{\alpha}^{(l)}\cdotp\overline{\alpha}^{(j)}+\lambda=0.\label{lin2}
\end{equation}
Therefore set of equations (\ref{lin_eqs}) has a matrix form: 
\begin{equation}\label{main_eq}
\pmatrix{2\overline{\alpha}^{(1)}\cdotp\overline{\alpha}^{(1)} & 2\overline{\alpha}^{(1)}\cdotp\overline{\alpha}^{(2)} & \dots & 2\overline{\alpha}^{(1)}\cdotp\overline{\alpha}^{(k)} & 1 \cr
2\overline{\alpha}^{(2)}\cdotp\overline{\alpha}^{(1)} & 2\overline{\alpha}^{(2)}\cdotp\overline{\alpha}^{(2)} & \dots & 2\overline{\alpha}^{(2)}\cdotp\overline{\alpha}^{(k)} & 1 \cr
\vdots &  & \ddots &  & \vdots \cr
2\overline{\alpha}^{(k)}\cdotp\overline{\alpha}^{(1)} & 2\overline{\alpha}^{(k)}\cdotp\overline{\alpha}^{(2)} & \dots & 2\overline{\alpha}^{(k)}\cdotp\overline{\alpha}^{(k)} & 1 \cr
1 & 1 & \dots & 1 & 0
}\pmatrix{a_{1} \cr
a_{2} \cr
\vdots \cr
a_{k} \cr
\lambda
}=\pmatrix{0 \cr
0 \cr
\vdots \cr
0 \cr
1
}.
\end{equation}
The matrix is symmetric ($\overline{\alpha}^{(i)}\cdotp\overline{\alpha}^{(j)}=\overline{\alpha}^{(j)}\cdotp\overline{\alpha}^{(i)}$)
and the diagonal entries are $2\overline{\alpha}^{(i)}\cdotp\overline{\alpha}^{(i)}=2L(\frac{1}{2})^{2}=\frac{L}{2}$.
\begin{remark}
Because all entries of matrix (\ref{main_eq}) are rational, it follows from Cartan's theorem that solutions $a_1,\dots,a_k,\lambda$ are also rational. On the other hand, from equation (\ref{lin1}) one has that $||\overline\beta_c||^2=-\frac{\lambda}{2}$. Note that the critical value of the entropy is given by $E_c=\frac{1}{2}-\frac{2}{L}||\overline\beta_c||^2$ and therefore it is a rational number.
\end{remark}

The following lemma explains the relation between critical points for $L$ and $L+1$ qubits. In particular it shows that some of the critical points can be found in an iterative way.
\begin{lemma} Assume $\overline{\beta}_{c}=\sum_{i=1}^{k}a_{i}^{c}\overline{\alpha}^{(i)}\in\mathcal{B}_L$, i.e. $\overline{\beta}_{c}$
is the closest to zero point of the set $\mathrm{conv}\{\overline{\alpha}^{(1)},\dots,\overline{\alpha}^{(k)}\}$ for $L$ qubits. Then 
\begin{enumerate}
\item $\overline{\beta}_{c}'=(\overline{\beta}_{c};1/2)=(\beta_{c,1},\beta_{c,2},\dots,\beta_{c,L},1/2)$
and 
\item$\overline{\beta}_{c}''=(\overline{\beta}_{c};0)=(\beta_{c,1},\beta_{c,2},\dots,\beta_{c,L},0)$, 
\end{enumerate}
belong to $\mathcal{B}_{L+1}$ for  $L+1$ qubits. \label{lemma:iter} 
\end{lemma}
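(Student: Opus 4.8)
The plan is to prove each of the two assertions by exhibiting, for the candidate point in question, an explicit nonempty subset of the $(L+1)$-qubit weights $\mathbb{A}_{L+1}$ whose closest point to the origin is exactly that candidate. By the definition of $\mathcal{B}_{L+1}$ nothing more is required: membership only asks that the point be the closest-to-origin point of $\mathrm{conv}(S)$ for \emph{some} nonempty $S\subset\mathbb{A}_{L+1}$, so there is no need to compare against other subsets. The structural fact underlying everything is that, by Lemma \ref{qubits_image}, the vertices of the $(L+1)$-hypercube are obtained from those of the $L$-hypercube by appending a last coordinate equal to $\pm\frac{1}{2}$; thus each $\overline{\alpha}^{(i)}$ gives rise to two weights $(\overline{\alpha}^{(i)};\frac{1}{2})$ and $(\overline{\alpha}^{(i)};-\frac{1}{2})$ in $\mathbb{A}_{L+1}$.

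For part (i) I would take $S_{+}=\{(\overline{\alpha}^{(i)};\frac{1}{2})\}_{i=1}^{k}$. Since the coefficients of any convex combination sum to one, the last coordinate of every point of $\mathrm{conv}(S_{+})$ is forced to equal $\frac{1}{2}$, so such a point has the form $(\sum_{i}a_{i}\overline{\alpha}^{(i)};\frac{1}{2})$ with squared norm
\[
\left\| \sum_{i=1}^{k} a_{i}\,\overline{\alpha}^{(i)} \right\|^{2} + \frac{1}{4}.
\]
As the additive constant $\frac{1}{4}$ does not depend on the $a_{i}$, the minimiser over the simplex $\{a_{i}\geq 0,\ \sum_{i}a_{i}=1\}$ coincides with the minimiser of $\|\sum_{i}a_{i}\overline{\alpha}^{(i)}\|^{2}$, which by hypothesis is $\overline{\beta}_{c}$, attained at $a_{i}=a_{i}^{c}$. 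Hence the closest point to the origin of $\mathrm{conv}(S_{+})$ is $(\overline{\beta}_{c};\frac{1}{2})=\overline{\beta}_{c}'$, which gives $\overline{\beta}_{c}'\in\mathcal{B}_{L+1}$.

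For part (ii) I would instead use the doubled subset $S_{\pm}=\{(\overline{\alpha}^{(i)};\frac{1}{2})\}_{i=1}^{k}\cup\{(\overline{\alpha}^{(i)};-\frac{1}{2})\}_{i=1}^{k}$. Writing a general convex combination with weights $b_{i}\geq 0$ on the $+\frac{1}{2}$ vertices and $c_{i}\geq 0$ on the $-\frac{1}{2}$ vertices, and setting $a_{i}:=b_{i}+c_{i}$ and $d:=\frac{1}{2}\sum_{i}(b_{i}-c_{i})$, the combination equals $(\sum_{i}a_{i}\overline{\alpha}^{(i)};d)$, with $\sum_{i}a_{i}=1$ and squared norm
\[
\left\| \sum_{i=1}^{k} a_{i}\,\overline{\alpha}^{(i)} \right\|^{2} + d^{2}.
\]
The two summands can now be minimised separately: for any feasible choice of the $a_{i}$ the value $d=0$ is attainable via the symmetric split $b_{i}=c_{i}=a_{i}/2$, so the minimum over $d$ is $0$, after which minimising the remaining term returns $\overline{\beta}_{c}$ at $a_{i}=a_{i}^{c}$. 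The closest point is therefore $(\overline{\beta}_{c};0)=\overline{\beta}_{c}''$, giving $\overline{\beta}_{c}''\in\mathcal{B}_{L+1}$.

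The one step demanding genuine care — and the crux of the argument — is the decoupling of the minimisation in part (ii): one must verify that the constraint $d=0$ is compatible with \emph{every} feasible $a_{i}$ (so that no trade-off between the two summands is forced), and that the nonnegativity constraints $b_{i},c_{i}\geq 0$ do not obstruct realising the optimal $a_{i}=a_{i}^{c}$. Both follow at once from the split $b_{i}=c_{i}=a_{i}/2$, which keeps all coefficients nonnegative while annihilating the last coordinate. If $\mathcal{B}_{L}$ is understood with the positive Weyl chamber restriction, I would add the trivial observation that the appended coordinate ($\frac{1}{2}$ or $0$) is nonnegative and the first $L$ coordinates are unchanged, so both $\overline{\beta}_{c}'$ and $\overline{\beta}_{c}''$ remain in $\mathfrak{t}_{+}$; the chamber condition thus presents no obstruction. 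Everything else reduces to the elementary fact that appending a fixed last coordinate merely shifts the squared norm by a constant.
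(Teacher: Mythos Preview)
Your argument is correct. Part (i) is essentially identical to the paper's own proof: both lift each $\overline{\alpha}^{(i)}$ to $(\overline{\alpha}^{(i)};\tfrac{1}{2})$ and observe that the squared norm picks up an additive constant $\tfrac{1}{4}$, so the minimiser is unchanged.

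Part (ii), however, takes a genuinely different route. The paper does \emph{not} double the vertex set. Instead, after ordering the coefficients $a_1^c\geq\cdots\geq a_k^c$, it selects $k+1$ specific vertices
\[
(\overline{\alpha}^{(1)};\tfrac{1}{2}),\ (\overline{\alpha}^{(2)};-\tfrac{1}{2}),\ \ldots,\ (\overline{\alpha}^{(k)};-\tfrac{1}{2}),\ (\overline{\alpha}^{(k)};\tfrac{1}{2}),
\]
writes down explicit coefficients $a_l^{\prime\prime c}$ (in particular $a_k^{\prime\prime c}=a_1^c+a_k^c-\tfrac{1}{2}$ and $a_{k+1}^{\prime\prime c}=\tfrac{1}{2}-a_1^c$), and verifies by direct substitution that these satisfy the Lagrange system \eqref{main_eq} with the same multiplier $\lambda$. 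Your approach instead takes the full $2k$-vertex set $S_\pm$, decouples the squared norm as $\|\sum_i a_i\overline{\alpha}^{(i)}\|^2+d^2$, and uses the symmetric split $b_i=c_i=a_i/2$ to kill the last coordinate independently of the first $L$. This is more elementary: it bypasses the Lagrange-multiplier machinery entirely, handles the nonnegativity constraints transparently, and makes the minimality claim immediate by convexity. The price is only that you use $2k$ rather than $k+1$ vertices, which is harmless since membership in $\mathcal{B}_{L+1}$ requires merely the existence of \emph{some} subset; the paper's tighter vertex count is consistent with its earlier remark that at most $L+1$ vertices suffice, but is not needed for the lemma as stated.
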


\begin{proof} 
To prove the above statements one needs to show that points $\overline{\beta}_{c}'$ and $\overline{\beta}_{c}''$ are closest points to zero of the convex hulls of some subsets of vertices of $\mathbb{A}$.
\begin{enumerate}
\item Consider the set $\mathrm{conv}\{\overline{\alpha}^{'(1)},\dots,\overline{\alpha}^{'(k)}\}$,
where $\overline{\alpha}^{'(i)}=(\alpha_{1}^{(i)},\dots,\alpha_{L}^{(i)},1/2)$.
The squared norm of a vector from this set is 
\begin{eqnarray}
||\overline{\beta}'||^{2}=||a_{1}\overline{\alpha}^{'(1)}+a_{2}\overline{\alpha}^{'(2)}+\dots+a_{k}\overline{\alpha}^{'(k)}|=\\=\sum_{i=1}^{L}\Big{(}a_{1}\alpha_{i}^{(1)}+\dots+a_{k}\alpha_{i}^{(k)}\Big{)}^{2}+\frac{1}{4}\sum_{i=1}^{k}a_{i}
\end{eqnarray}
Note that $\sum_{i=1}^{k}a_{i}=1$, and hence $\|\beta'\|=\|\beta\|+\frac{1}{4}$, i.e. up to $\frac{1}{4}$ the squared norm of $\overline{\beta}'$ is the same as in the $L$-qubit
case. Therefore, the solution to the problem of finding the closest
to zero point of $\mathrm{conv}\{\overline{\alpha}^{(1)},\dots,\overline{\alpha}^{(k)}\}$
works also in this case. Hence, $\overline{\beta}_{c}'=(\beta_{c,1},\beta_{c,2},\dots,\beta_{c,L},\sum_{i=1}^{k}\frac{1}{2}a_{i}^{c})=(\beta_{c,1},\beta_{c,2},\dots,\beta_{c,L},1/2)$.

\item Assume that critical coefficients, $\{a_{i}^{c}\}_{i=1}^{k}$,
are sorted in a decreasing order, i.e. $a_{1}^{c}\geq a_{2}^{c}\geq\dots\geq a_{k}^{c}$.
Let us now consider the set 
\begin{equation}
\mathrm{conv}\{\overline{\alpha}^{''(1)},\dots,\overline{\alpha}^{''(k)},\overline{\alpha}^{''(k+1)}\},
\end{equation}
where $\overline{\alpha}^{''(1)}=(\alpha_{1}^{(1)},\dots,\alpha_{L}^{(1)},1/2)$,
$\overline{\alpha}^{''(l)}=(\alpha_{1}^{(l)},\dots,\alpha_{L}^{(l)},-1/2)$
for $1<l\leq k$ and $\overline{\alpha}^{''(k+1)}=(\alpha_{1}^{(k)},\dots,\alpha_{L}^{(k)},1/2)$.
By straightforward calculation one checks that the coefficients $a_{l}^{''c}=a_{l}^{c}$
for $1\leq l<k$, $a_{k}^{''c}=a_{1}^{c}+a_{k}^{c}-1/2$ and $a_{k+1}^{''c}=1/2-a_{1}^{c}$
satisfy equation (\ref{main_eq}) with $\lambda''=\lambda=-2||\beta_{c}||^{2}$.
To see this, let us, for example, calculate the first vector element
of the left hand side of equation (\ref{main_eq}) 
\begin{eqnarray*}
\fl\sum_{i=1}^{k+1}2\overline{\alpha}^{''(1)}\cdotp\overline{\alpha}^{''(i)}a_{i}^{''c}+\lambda''=2\sum_{i=1}^{k}\overline{\alpha}^{(1)}\cdotp\overline{\alpha}^{(i)}a_{i}^{c}+2\overline{\alpha}^{(1)}\cdotp\overline{\alpha}^{(k)}\Big{(}a_{1}^{c}-\frac{1}{2}\Big{)}+\frac{1}{2}a_{1}^{c}\\\fl-\frac{1}{2}\Big{(}a_{2}^{c}+\dots+a_{k}^{c}\Big{)}
-\frac{1}{2}\Big{(}a_{1}^{c}-\frac{1}{2}\Big{)}+\Big{(}\frac{1}{2}-a_{1}^{c}\Big{)}\Big{(}2\overline{\alpha}^{(1)}\cdotp\overline{\alpha}^{(k)}+\frac{1}{2}\Big{)}+\lambda=0,
\end{eqnarray*}
where we used the fact that $a_{1}^{c}+\dots+a_{k}^{c}=1$ and $2\sum_{i=1}^{k}\overline{\alpha}^{(1)}\cdotp\overline{\alpha}^{(i)}a_{i}^{c}=-\lambda$.
Finally, by similar calculation one checks that $\overline{\beta}_{c}''=\sum_{i=1}^{k+1}a_{i}^{''c}\overline{\alpha}^{''(i)}=(\beta_{c,1},\beta_{c,2},\dots,\beta_{c,L},0)$. 
\end{enumerate}
\end{proof}
\noindent The remaining critical values and corresponding closest to zero points
can be found by numerical solution of equation (\ref{lin2}). The results
for $3,4$ and $5$ qubits are shown in table \ref{tabelka2}.

Next we give the characterization of the set $Z_{\beta}$ for $L$ qubits. As noted in the proof of lemma \ref{qubits_image} states that are mapped by $\mu$ to the vertices spanning $\mathbb{A}$ are exactly separable states of the form $\ket{i_1}\otimes \ket{i_2}\otimes\ldots\otimes \ket{i_L}$. Moreover, each set $\mu_T(Z_\beta)$ is the convex hull of some subset of vertices belonging to $\mathbb{A}$. One may then expect that $Z_\beta$ is a complex projective space of the vector space spanned by separable states that are mapped by $\mu_T$ on the vertices spanning $\mu_T(Z_\beta)$. The following lemma shows that this is the case.

\begin{lemma}\label{lemma_zbeta} Let $\mathbb{A}_{\beta}=\left\{\overline{\alpha}^{(1)},\overline{\alpha}^{(2)},\dots,\overline{\alpha}^{(k)}\right\}$
be a subset of all vertices of hypercube $\mathbb{H}=\mu_{T}\left(\mathbb{P}\left(\mathcal{H}\right)\right)$ such that $\overline\beta\cdot\overline\alpha^{(i)}=||\overline\beta||^2$ for an arbitrary $\overline\beta\in\mathfrak{t}$.
Let us also denote by $\mathcal{S}\subset\mathcal{H}$ a subspace spanned by separable states that are mapped by $\mu_{T}$ on the vertices $\overline{\alpha}^{(i)}$. Then $Z_{\beta}=\mathbb{P}(\mathcal{S})$.
\end{lemma}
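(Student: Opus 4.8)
The plan is to characterise the set $Z_\beta$ directly through the weight decomposition of $\mathcal{H}$ under the torus action and then match that description with the definition of $\mathcal{S}$. Write a representative $\psi=\sum_{\overline\alpha}c_{\overline\alpha}\ket{\overline\alpha}$ of a point $[\psi]\in M$ in the basis of separable states, the sum running over all $2^L$ vertices $\overline\alpha$ of the hypercube $\mathbb{A}_L$, where $\ket{\overline\alpha}$ is the basis state with $\mu_T(\ket{\overline\alpha})=\overline\alpha$ (recall from the proof of Lemma \ref{qubits_image} that these are exactly the $T$-fixed points and they exhaust the vertices). The first step is to record how $T_\beta=\{e^{\beta t}\}$ acts: on the weight-$\overline\alpha$ basis vector it acts by the phase $e^{2it\langle\overline\alpha,\overline\beta\rangle}$ (the precise proportionality constant is immaterial to what follows). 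Hence $[\psi]$ is fixed by $T_\beta$ if and only if all weights $\overline\alpha$ occurring with $c_{\overline\alpha}\neq0$ share a common value of $\langle\overline\alpha,\overline\beta\rangle$; equivalently $\psi$ lies in a single eigenspace of the generator of the $T_\beta$-action. Writing $\mathcal{S}_c$ for the span of the basis states with $\langle\overline\alpha,\overline\beta\rangle=c$, the fixed-point set of $T_\beta$ is the disjoint union $\bigsqcup_c\mathbb{P}(\mathcal{S}_c)$, each $\mathbb{P}(\mathcal{S}_c)$ is connected, and because distinct eigenvalues give orthogonal eigenspaces these are precisely the connected components of the fixed-point set.

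The second step is to evaluate the defining condition (\ref{perpen}) on each component. For normalised $\psi$ the torus momentum map is the barycentre of the occupied weights, $\mu_T([\psi])=\sum_{\overline\alpha}|c_{\overline\alpha}|^2\,\overline\alpha$; this follows because the diagonal entries of $\rho_i([\psi])$ are the marginal probabilities $\sum_{\overline\alpha:\,i_k=j}|c_{\overline\alpha}|^2$. Therefore on $\mathbb{P}(\mathcal{S}_c)$ one has $\langle\mu_T([\psi]),\overline\beta\rangle=\sum_{\overline\alpha}|c_{\overline\alpha}|^2\langle\overline\alpha,\overline\beta\rangle=c$, a constant on the whole component. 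The condition $\langle\mu_T(x),\overline\beta\rangle=\|\overline\beta\|^2$ that cuts out $Z_\beta$ thus selects exactly the single component with $c=\|\overline\beta\|^2$, giving $Z_\beta=\mathbb{P}(\mathcal{S}_{\|\overline\beta\|^2})$.

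Finally I would identify $\mathcal{S}_{\|\overline\beta\|^2}$ with $\mathcal{S}$. Since every basis state has weight equal to a hypercube vertex, a basis state lies in $\mathcal{S}_{\|\overline\beta\|^2}$ if and only if its weight $\overline\alpha$ is a vertex satisfying $\langle\overline\alpha,\overline\beta\rangle=\|\overline\beta\|^2$, which is precisely the condition defining the vertex set $\mathbb{A}_\beta$. Hence $\mathcal{S}_{\|\overline\beta\|^2}$ is spanned by the separable states mapped by $\mu_T$ onto the vertices $\overline\alpha^{(i)}$, that is $\mathcal{S}_{\|\overline\beta\|^2}=\mathcal{S}$, and therefore $Z_\beta=\mathbb{P}(\mathcal{S})$. (If $\mathbb{A}_\beta=\emptyset$ then $\mathcal{S}=\{0\}$ and both sides are empty, so the claim holds trivially.)

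I expect the only genuine subtlety — the \emph{hard part} — to be the bookkeeping that reconciles the connected-component wording in the definition of $Z_\beta$ with the eigenspace decomposition: one must verify that the level sets $\mathbb{P}(\mathcal{S}_c)$ really are the connected components of the $T_\beta$-fixed set (and not merely a partition), and that $\langle\mu_T,\overline\beta\rangle$ is genuinely constant on each, so that imposing $c=\|\overline\beta\|^2$ picks out a full component rather than a proper subset. Everything else reduces to the elementary phase computation and the barycentre formula for $\mu_T$.
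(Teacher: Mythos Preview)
Your proof is correct and follows essentially the same route as the paper's: both rely on the phase computation $k_t\cdot\ket{\overline\alpha}=e^{2it\langle\overline\alpha,\overline\beta\rangle}\ket{\overline\alpha}$ to identify the $T_\beta$-fixed set, and on the barycentre formula $\mu_T([\psi])=\sum_{\overline\alpha}|c_{\overline\alpha}|^2\overline\alpha$ to evaluate the perpendicularity condition. The only organisational difference is that the paper argues by two inclusions (showing $\mathbb{P}(\mathcal{S})\subset Z_\beta$ directly and then $Z_\beta\cap\mathbb{P}(\mathcal{H}\setminus\mathcal{S})=\emptyset$ by contradiction), whereas you first decompose the entire fixed-point set as $\bigsqcup_c\mathbb{P}(\mathcal{S}_c)$ and then select the component with $c=\|\overline\beta\|^2$; your packaging is slightly cleaner but the content is the same.
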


\begin{proof} We start with showing that $\mathbb{P}(\mathcal{S})\subset Z_{\beta}$.
To this end, we introduce the following notation. Vertices from $\mathbb{A}_{\beta}$
are of the form $\overline{\alpha}^{(i)}=\big{(}\sigma_{i_{1}}\frac{1}{2},\sigma_{i_{2}}\frac{1}{2},\dots,\sigma_{i_{L}}\frac{1}{2}\big{)}$,
where $\sigma_{i_{l}}=(-1)^{(1-i_{l})}$ and $i_{l}\in\{0,1\}$. The separable state from $\mathcal{S}$ that is mapped by $\mu_{T}$ on
$\overline{\alpha}^{(i)}$ is the state $\ket{i_{1}}\otimes \ket{i_{2}}\otimes\ldots\otimes \ket{i_{L}}$, i.e.
\begin{equation}
\mu_T(\ket{i_1}\otimes\ldots\otimes\ket{i_L})=\big{(}\sigma_{i_{1}}\frac{1}{2},\sigma_{i_{2}}\frac{1}{2},\dots,\sigma_{i_{L}}\frac{1}{2}\big{)}=\overline{\alpha}^{(i)}.
\end{equation}
An element of the group $T_{\beta}$ can be represented as a matrix acting on $\mathcal{H}$ 
\begin{equation}
\fl k_{t}=\pmatrix{e^{-i\beta_{1}t} & 0 \cr
0 & e^{i\beta_{1}t}
}\otimes\pmatrix{e^{-i\beta_{2}t} & 0\cr
0 & e^{i\beta_{2}t}
}\otimes\dots\otimes\pmatrix{e^{-i\beta_{L}t} & 0\cr
0 & e^{i\beta_{L}t}
},\ t\in\mathbb{R}.
\end{equation}
We will next show that all separable states from $\mathcal{S}$ belong
to the same eigenspace of $k_{t}$ for any $t\in \mathbb{R}$. Matrix $k_{t}$ acts on an
arbitrary separable state from $\mathcal{S}$ in the following way
\begin{equation}
\fl k_{t}.\ket{i_{1}}\otimes \ket{i_{2}}\otimes\dots\otimes \ket{i_{L}}=\exp\left(it\sum_{l=1}^{L}\sigma_{i_{l}}\beta_{l}\right)\ket{i_{1}}\otimes \ket{i_{2}}\otimes\dots\otimes \ket{i_{L}}.\label{eq:exp_coefficient-1}
\end{equation}
Note that sum appearing in the exponent in (\ref{eq:exp_coefficient-1})
can be written as 
\begin{equation}
\sum_{l=1}^{L}\sigma_{i_{l}}\beta_{l}=2\overline{\beta}\cdotp\overline{\alpha}^{(i)}=2||\overline{\beta}||^{2},
\end{equation}
which means that it does not depend on $i$. Hence, all separable states from $\mathcal{S}$ belong to the same eigenspace of $k_{t}$ with eigenvalue $\exp(2i||\overline{\beta}||^{2}t)$. As a direct consequence, all states from $\mathbb{P}(\mathcal{S})$ are fixed by the action of $T_{\beta}$. We still need to show that
for any $\ket{\Psi}\in\mathcal{S}$, $\overline{\beta}\cdotp\mu_{T}\left(\ket{\Psi}\right)=||\overline{\beta}||^{2}$.
A state $\ket{\Psi}\in\mathcal{S}$ is of the form 
\begin{equation}
\ket{\Psi}=\sum_{i=1}^{k}c_{i}\ket{i_{1}}\otimes \ket{i_{2}}\otimes\dots\otimes \ket{i_{L}},\label{eq:lem_zbeta_state}
\end{equation}
where $\sum_{i=1}^{k}|c_{i}|^{2}=1.$ The diagonal entries of
the $l$th one-qubit reduced density matrix read 
\[
\rho_{00}^{(l)}=\sum_{i:i_{l}=0}|c_{i}|^{2},\,\,\rho_{11}^{(l)}=\sum_{i:i_{l}=1}|c_{i}|^{2}=1-\rho_{00}^{(l)}.
\]
Therefore, the $l$th component of vector $\overline{\eta}=\mu_{T}(\ket{\Psi})$ is 
\begin{equation}
\fl \eta_{l}=\rho_{11}^{(l)}-\frac{1}{2}=\sum_{i:i_{l}=1}|c_{i}|^{2}-\frac{1}{2}=\sum_{i:i_{l}=1}|c_{i}|^{2}-\frac{1}{2}\sum_{i=1}^{k}|c_{i}|^{2}=\sum_{i=1}^{k}\frac{1}{2}(-1)^{(1-i_{l})}|c_{i}|^{2}.\label{rho_l-1}
\end{equation}
Applying (\ref{rho_l-1}) to each element of vector $\overline{\eta}$
and using the fact that $\overline{\alpha}^{(i)}=\big{(}\sigma_{i_{1}}\frac{1}{2},\sigma_{i_{2}}\frac{1}{2},\dots,\sigma_{i_{L}}\frac{1}{2}\big{)}$,
where $\sigma_{i_{l}}=(-1)^{(1-i_{l})}$, one obtains 
\[
\overline{\eta}=\sum_{i=1}^{k}|c_{i}|^{2}\overline{\alpha}^{(i)}.
\]
 Finally, let us calculate
\[
\overline{\beta}\cdotp\overline{\eta}=\sum_{i=1}^{k}|c_{i}|^{2}\overline{\beta}\cdotp\overline{\alpha}^{(i)}=||\overline{\beta}||^{2}\sum_{i=1}^{k}|c_{i}|^{2}=||\overline{\beta}||^{2},
\]
which finishes the proof that $\mathbb{P}(\mathcal{S})\subset Z_{\beta}$.

The final step to prove equality of those sets is to show that $Z_{\beta}\cap\mathbb{P}(\mathcal{H}\setminus\mathcal{S})=\emptyset$.
Assume there exists $\ket{\Psi}\in Z_{\beta}$ that does not belong
to $\mathbb{P}(\mathcal{S})$. It follows that there exists at least
one additional separable tensor, $\ket{a_{1}}\otimes \ket{a_{2}}\otimes\dots\otimes \ket{a_{L}}\notin\mathcal{S}$
that appears in the decomposition of $\ket{\Psi}$, i.e. 
\begin{eqnarray}
[\Psi]=\alpha \ket{a_{1}}\otimes \ket{a_{2}}\otimes\dots\otimes \ket{a_{L}}+\sum_{i=1}^{k}c_{i}\ket{i_{1}}\otimes \ket{i_{2}}\otimes\dots\otimes \ket{i_{L}}=\\
=\alpha \ket{a_{1}}\otimes \ket{a_{2}}\otimes\dots\otimes \ket{a_{L}}+\Psi_{\mathcal{S}},\,\,\Psi_{\mathcal{S}}\in\mathbb{P}(\mathcal{S}).
\end{eqnarray}
As state $\ket{\Psi}$ is fixed by the action of $T_{\beta}$
and $T_{\beta}$ is diagonal in the basis of separable states,
for arbitrary $t\in \mathbb{R}$ the state $\ket{a_{1}}\otimes \ket{a_{2}}\otimes\dots\otimes \ket{a_{L}}$
belongs to the same eigenspace of $k_{t}\in T_{\beta}$ as
the remaining separable states from $\mathbb{P}(\mathcal{S})$. Applying
(\ref{eq:exp_coefficient-1}) to
$\ket{a_{1}}\otimes \ket{a_{2}}\otimes\dots\otimes \ket{a_{L}}$
one can see that the scalar product of $\overline{\alpha}^{(a)}=\mu_{T}\left(\ket{a_{1}}\otimes \ket{a_{2}}\otimes\dots\otimes \ket{a_{L}}\right)$
and $\overline{\beta}$ must be then equal to $||\overline{\beta}||^{2}$.
This in turn implies that $\overline{\alpha}^{(a)}\in\mathbb{A}_{\beta}$,
which is a contradiction.\end{proof}

The method for finding critical values of $||\mu||^{2}$ described
in this section raises another issue. We know that critical
sets of $||\mu||^{2}$ are the $K$-orbits through sets $Z_{\beta_{c}}\cap\mu^{-1}\left(\overline{\beta_{c}}\right)$,
where $\overline{\beta_{c}}$ is a point from $\mathfrak{t}_{+}$
that is the closest to zero point of a hyperplane spanned by subset of vertices
of the hypercube $\mathbb{H}=\mu_{T}\left(\mathbb{P}\left(\mathcal{H}\right)\right)$.
In order to decide whether $\overline{\beta}_{c}$ is in the image under $\mu$ of some critical set, one has to know that
$Z_{\beta_{c}}\cap\mu^{-1}\left(\overline{\beta_{c}}\right)\neq\emptyset$. In the following, we show that this is the case for almost all critical values $\overline{\beta}_{c}\in\mathfrak{t}_{+}$. Let us first prove a useful lemma.

\begin{lemma} \label{lemma_state}For $\overline{\beta}_{c}\in\mathfrak{t}_{+}$,
a closest to zero point of set $\mathrm{conv}\left\{ \overline{\alpha}^{(1)},\overline{\alpha}^{(2)},\dots,\overline{\alpha}^{(k)}\right\} $,
$\overline{\beta}_{c}=\sum_{i=1}^{k}a_{i}^{c}\overline{\alpha}^{(i)}$,
define state $\ket{\Psi}\in\mathcal{H}$
\begin{equation}
\ket{\Psi}=\sum_{i=1}^{k}\sqrt{a_{i}^{c}}\ket{i_{1}}\otimes \ket{i_{2}}\otimes\dots\otimes \ket{i_{L}},\label{eq:lem_state}
\end{equation}
where $\mu_{T}\left(\ket{i_{1}}\otimes \ket{i_{2}}\otimes\dots\otimes \ket{i_{L}}\right)=\overline{\alpha}^{(i)}$.
Then $\mu_{T}\left(\ket{\Psi}\right)=\overline{\beta}_{c}$.
Morever, if $\beta_{c,m}>0$ for all $m$ then the one-qubit reduced density
matrices of state $\ket{\Psi}$ are diagonal which in particular means that $\mu(\ket{\Psi})=\mu_T(\ket{\Psi})$. \end{lemma}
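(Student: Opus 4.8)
The plan is to treat the two assertions in turn, leaning on the computation already carried out in the proof of Lemma \ref{lemma_zbeta}. First note that $\ket{\Psi}$ is normalised, since $\sum_{i=1}^{k}\left(\sqrt{a_{i}^{c}}\right)^{2}=\sum_{i=1}^{k}a_{i}^{c}=1$ by the convexity constraint. For the identity $\mu_{T}(\ket{\Psi})=\overline{\beta}_{c}$ I would simply re-use the computation leading to (\ref{rho_l-1}): for any state supported on the separable vectors $\ket{i_{1}}\otimes\dots\otimes\ket{i_{L}}$ with amplitudes $c_{i}$, the diagonal reduced data assemble into $\mu_{T}(\ket{\Psi})=\sum_{i=1}^{k}|c_{i}|^{2}\,\overline{\alpha}^{(i)}$. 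Taking $c_{i}=\sqrt{a_{i}^{c}}$ gives $|c_{i}|^{2}=a_{i}^{c}$ and hence $\mu_{T}(\ket{\Psi})=\sum_{i=1}^{k}a_{i}^{c}\,\overline{\alpha}^{(i)}=\overline{\beta}_{c}$, which is the first claim.

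For the second statement I would compute the off-diagonal entries of the $m$th one-qubit reduced density matrix directly. Tracing $\kb{\Psi}{\Psi}$ over all tensor slots but the $m$th, the entry $(\rho_{m})_{01}$ is the sum of the products $\sqrt{a_{i}^{c}a_{j}^{c}}$ taken over all pairs of basis vectors that coincide in every tensor factor except the $m$th, where one carries $\ket{0}$ and the other $\ket{1}$. Because the amplitudes $\sqrt{a_{i}^{c}}$ are real and nonnegative, this sum vanishes precisely when no such pair occurs among the chosen vertices. In the weight picture a contributing pair is a pair $\overline{\alpha}^{(i)},\overline{\alpha}^{(j)}$ agreeing in all coordinates except the $m$th, where they take the opposite values $\mp\frac{1}{2}$.

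The key step is to rule out such pairs when $\beta_{c,m}>0$. Since $\overline{\beta}_{c}$ is a minimal combination of weights, equation (\ref{lin1}) gives $\overline{\beta}_{c}\cdot\overline{\alpha}^{(l)}=-\lambda/2$ for each vertex of the defining set; contracting this with the coefficients $a_{l}^{c}$ and using $\sum_{l}a_{l}^{c}=1$ identifies $-\lambda/2=||\overline{\beta}_{c}||^{2}$, so that $\overline{\beta}_{c}\cdot\overline{\alpha}^{(l)}=||\overline{\beta}_{c}||^{2}$ for every such $l$. If two vertices differed only in coordinate $m$, subtracting their scalar products with $\overline{\beta}_{c}$ would yield $\beta_{c,m}\bigl(\frac{1}{2}-(-\frac{1}{2})\bigr)=\beta_{c,m}$ on one side and $0$ on the other, forcing $\beta_{c,m}=0$ and contradicting the hypothesis. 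Hence, whenever $\beta_{c,m}>0$, the matrix $\rho_{m}$ has no off-diagonal part; applying this to every $m$ shows that all the reduced matrices are diagonal, and therefore $\mu(\ket{\Psi})=\mu_{T}(\ket{\Psi})$.

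I expect the main obstacle to be the bookkeeping in the second part rather than any deep argument: one must match indices correctly in the partial trace and translate the condition ``agree in all slots but the $m$th'' into the weight-difference identity $\overline{\alpha}^{(j)}-\overline{\alpha}^{(i)}=(0,\dots,0,1,0,\dots,0)$. Once this dictionary is in place, the contradiction with $\beta_{c,m}>0$ reduces to the single-line computation above, and the conclusion $\mu=\mu_{T}$ is immediate from the definitions of the two momentum maps.
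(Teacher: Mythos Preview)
Your proposal is correct and follows essentially the same approach as the paper's proof: both invoke the computation from Lemma~\ref{lemma_zbeta} with $c_i=\sqrt{a_i^c}$ for the first claim, and for the second both argue by contradiction that a nonzero off-diagonal entry in $\rho_m$ forces a pair of vertices differing only in the $m$th coordinate, whence $\overline{\alpha}^{(i)}\cdot\overline{\beta}_c=\overline{\alpha}^{(j)}\cdot\overline{\beta}_c$ yields $\beta_{c,m}=0$. Your version is slightly more explicit in noting that the off-diagonal entry is a sum of nonnegative terms $\sqrt{a_i^c a_j^c}$ (so nonvanishing forces an actual contributing pair) and in deriving $-\lambda/2=\|\overline{\beta}_c\|^2$ by contracting (\ref{lin1}) with the coefficients, but the logic is identical.
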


\begin{proof}Calculation showing that $\mu_{T}(\ket{\Psi})=\overline{\beta}_{c}$
is analogous to the one presented in the proof of lemma \ref{lemma_zbeta} (put $c_{i}=\sqrt{a_{i}^{c}}$
in (\ref{eq:lem_zbeta_state})). In order to show the remaining part assume that the $m$-th one-qubit reduced
density matrix, $\rho_{m}(\ket{\Psi})$, is non-diagonal while all $\beta_{c,i}>0$.
This implies that there exist such $i$, $j$ that $i$-th and $j$-th
basis vectors from (\ref{eq:lem_state}) satisfy $\ket{i_{l}}=\ket{j_{l}}$
for all $l\neq m$ and $\ket{i_{m}}\neq \ket{j_{m}}$. These basis vectors
are mapped by $\mu_{T}$ to vertices $\overline{\alpha}^{(i)}$ and
$\overline{\alpha}^{(j)}$ such that $\alpha_{l}^{(i)}=\alpha_{l}^{(j)}$
for $l\neq m$ and $\alpha_{m}^{(i)}\neq\alpha_{m}^{(j)}$. From the
fact that $\overline{\beta}_{c}$ is the closest to zero point, it
follows that
\begin{equation}
\overline{\alpha}^{(i)}\cdotp\overline{\beta}_{c}=\overline{\alpha}^{(j)}\cdotp\overline{\beta}_{c}.\label{eq:lem_prods}
\end{equation}
As $\alpha_{m}^{(i)}\neq\alpha_{m}^{(j)}$ and the remaining
components of vectors $\overline{\alpha}^{(i)}$ and $\overline{\alpha}^{(j)}$
are the same, (\ref{eq:lem_prods}) can be satisfied only if
$\beta_{c,m}=0$. This is a contradiction. \end{proof}
\noindent The theorem which we promised to prove is the following:

\begin{theorem}\label{z_beta_theorem} Let $\overline{\beta}_{c}\in\mathfrak{t_{+}}$ be
the closest to zero point of the convex hull of some subset of vertices
of the polytope $\mathbb{H}=\mu_{T}\left(\mathbb{P}\left(\mathcal{H}\right)\right)$
for $L$ qubits. Set $Z_{\beta_{c}}\cap\mu^{-1}\left(\overline{\beta_{c}}\right)\subset\mathbb{P}\left(\mathcal{H}\right)$
is nonempty iff $\overline{\beta}_{c}\neq\left(\frac{1}{2},\frac{1}{2},\dots,\frac{1}{2},0\right)$, up to permutations of vector components.\end{theorem}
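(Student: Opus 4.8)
The plan is to split the proof into two directions, matching the biconditional. For the forward direction I would prove the contrapositive: if $\overline{\beta}_c = (\tfrac{1}{2},\dots,\tfrac{1}{2},0)$ (up to permutation), then $Z_{\beta_c}\cap\mu^{-1}(\overline{\beta}_c)=\emptyset$. The geometric picture is that this $\overline{\beta}_c$ is the center of a facet of the hypercube: the last coordinate being $0$ forces the supporting vertices to be exactly those $\overline{\alpha}^{(i)}$ with first $L-1$ entries all equal to $+\tfrac12$ and last entry $\pm\tfrac12$, namely the two vertices $(\tfrac12,\dots,\tfrac12,\pm\tfrac12)$. By Lemma \ref{lemma_zbeta}, $Z_{\beta_c}=\mathbb{P}(\mathcal{S})$ where $\mathcal{S}$ is spanned by the corresponding separable states $\ket{1\cdots1}\otimes\ket{0}$ and $\ket{1\cdots1}\otimes\ket{1}$ (taking the convention $\sigma_{i_l}=(-1)^{1-i_l}$, so $+\tfrac12\leftrightarrow\ket{1}$). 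A general state in $\mathcal{S}$ is $\ket{1\cdots1}\otimes(c_0\ket0+c_1\ket1)$, which is manifestly a product state. Its first $L-1$ reduced density matrices are all $\kb{1}{1}$, so the first $L-1$ components of $\mu(\ket\Psi)$ equal $+\tfrac12$, forcing $\mu=\mu_T$ on $\mathcal{S}$ here. But then the last reduced density matrix is a pure state $\kb{\psi}{\psi}$ whose shifted spectrum is $\{-\tfrac12,+\tfrac12\}$, so the last component of $\mu_T(\ket\Psi)$ can only be $\pm\tfrac12$, never $0$. Hence no state in $Z_{\beta_c}$ maps to $\overline{\beta}_c$, establishing the contrapositive.

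For the reverse direction, assume $\overline{\beta}_c\in\mathfrak{t}_+$ is not of the excluded form. I would invoke Lemma \ref{lemma_state}, which already supplies an explicit candidate state $\ket{\Psi}=\sum_i\sqrt{a_i^c}\ket{i_1}\otimes\dots\otimes\ket{i_L}$ satisfying $\mu_T(\ket\Psi)=\overline{\beta}_c$, and which lies in $\mathbb{P}(\mathcal{S})=Z_{\beta_c}$ by construction and Lemma \ref{lemma_zbeta}. The remaining task is to upgrade $\mu_T(\ket\Psi)=\overline{\beta}_c$ to $\mu(\ket\Psi)=\overline{\beta}_c$, i.e. to show the off-diagonal parts of all reduced density matrices vanish so that $\ket\Psi\in\mu^{-1}(\overline{\beta}_c)$. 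Lemma \ref{lemma_state} does exactly this under the hypothesis $\beta_{c,m}>0$ for all $m$: it guarantees all $\rho_m(\ket\Psi)$ are diagonal, whence $\mu(\ket\Psi)=\mu_T(\ket\Psi)=\overline{\beta}_c$ and $Z_{\beta_c}\cap\mu^{-1}(\overline{\beta}_c)\neq\emptyset$. So the only case left is when some components $\beta_{c,m}$ vanish.

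The genuinely delicate case, and the one I expect to be the main obstacle, is when $\overline{\beta}_c$ has some zero components but is not the excluded facet-center. Since $\overline{\beta}_c\in\mathfrak{t}_+$ lies in the positive Weyl chamber, the components are ordered and the zeros sit in a trailing block; write $\overline{\beta}_c=(\beta_{c,1},\dots,\beta_{c,r},0,\dots,0)$ with $\beta_{c,1}\geq\dots\geq\beta_{c,r}>0$ and $L-r$ trailing zeros. For a qubit index $m>r$ (a zero component) the candidate state from Lemma \ref{lemma_state} may have a non-diagonal reduced matrix, but this only means the diagonal is $(\tfrac12,\tfrac12)$; what I must produce is some state in $Z_{\beta_c}$ whose $m$-th reduced matrix is genuinely $\tfrac12\mathbbm{1}$, i.e. maximally mixed, not merely diagonal-looking. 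The idea is that freedom to choose relative phases (or, more robustly, to redistribute the coefficients $c_i$ within $\mathcal{S}$ while keeping $|c_i|^2=a_i^c$ fixed so that $\mu_T$ is unchanged) lets one cancel the off-diagonal entries. The off-diagonal element of $\rho_m$ is a sum $\sum c_i\bar c_j$ over pairs of basis vectors differing only in slot $m$; by Lemma \ref{lemma_zbeta} any two such vertices $\overline\alpha^{(i)},\overline\alpha^{(j)}$ both lie in $\mathbb{A}_\beta$, so both amplitudes are present with fixed moduli, and one has genuine phase freedom to arrange destructive interference. The key point distinguishing this case from the excluded one is that $r\geq1$ (at least one strictly positive component exists, since $\overline{\beta}_c\neq0$ and the all-zero case is the GHZ point which is known to be critical), so $\mathcal{S}$ contains enough distinct separable tensors that the pairing structure on the zero-slots is rich enough to permit cancellation; the excluded case is precisely the degenerate situation where $\mathcal{S}$ is too small (two vectors, a single product state) to allow it. I would carry out this phase-cancellation argument explicitly, showing that for every zero slot $m$ the contributing pairs can be simultaneously cancelled, thereby constructing a bona fide element of $Z_{\beta_c}\cap\mu^{-1}(\overline{\beta}_c)$ and completing the proof.
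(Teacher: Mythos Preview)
Your forward direction (the excluded case) is essentially correct, modulo a slip: you write ``the last component of $\mu_T(\ket\Psi)$ can only be $\pm\tfrac12$'', but $\mu_T$ reads off diagonal entries, and for $\ket{1\cdots1}\otimes(c_0\ket0+c_1\ket1)$ the last diagonal entry is $|c_1|^2-\tfrac12$, which \emph{can} be $0$. What you really need is that $\mu(\ket\Psi)=\overline\beta_c$ would force $\rho_L=\tfrac12\mathbbm1$, which is impossible since $\rho_L=\kb{\psi}{\psi}$ is rank one. With that fix the argument stands; in fact you prove more here than the paper does, since the paper only observes that its own construction breaks down at this point and does not argue emptiness directly.

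Your reverse direction in the strictly positive case matches the paper exactly (Lemmas~\ref{lemma_zbeta} and~\ref{lemma_state}).

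The genuine gap is the case where some $\beta_{c,m}=0$. Your proposal is to keep $|c_i|^2=a_i^c$ from the minimal combination and adjust phases to kill the off-diagonal of each $\rho_m$ with $\beta_{c,m}=0$. This does not work as stated. Take $L=4$ and $\overline\beta_c=(\tfrac16,\tfrac16,\tfrac16,0)$. One minimal combination (from Lemma~\ref{lemma:iter}(ii)) uses the four vertices corresponding to $\ket{0111},\ket{1010},\ket{1100},\ket{1101}$ with weights $\tfrac13,\tfrac13,\tfrac16,\tfrac16$. The off-diagonal of $\rho_4$ receives a contribution only from the single pair $(\ket{1100},\ket{1101})$, giving $c_{1100}\bar c_{1101}$ of fixed modulus $\tfrac16$; no choice of phases can make a single nonzero term vanish. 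Even if you enlarge to all of $\mathbb{A}_\beta$ and allow the moduli to vary, you then face the problem of simultaneously cancelling the off-diagonals for \emph{every} zero slot while keeping the diagonals correct, and your sketch gives no mechanism for this.

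The paper handles this case by a different construction that you are missing: it drops the $L-M$ zero coordinates to obtain an $M$-qubit $\overline{\tilde\beta}_c$ with all components strictly positive, builds $\ket\Psi\in Z_{\tilde\beta_c}\cap\mu^{-1}(\overline{\tilde\beta}_c)$ for $M$ qubits via part~(a), and then sets
\[
\ket\Phi=\tfrac{1}{\sqrt2}\bigl(\ket\Psi\otimes\ket{0\cdots0}+\ket{\Psi^\perp}\otimes\ket{1\cdots1}\bigr),
\]
where $\ket{\Psi^\perp}=k.\ket\Psi$ for some $k$ in the $M$-qubit maximal torus with $\bk{\Psi}{k.\Psi}=0$. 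This GHZ-like extension makes all the last $L-M$ reduced matrices equal to $\tfrac12\mathbbm1$ in one stroke. The existence of such a $k$ reduces to closing a planar polygon with side lengths $b_i^c$ and at most $M$ independent angles; the paper then proves a separate lemma that the $b_i^c$ satisfy the polygon inequality, which is the substantive step you have not supplied. The excluded vector $(\tfrac12,\dots,\tfrac12,0)$ is exactly the case where the polygon degenerates to a single side and cannot close.
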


\begin{proof} The proof consists of two parts. First we consider
the case, where all components of vector $\overline{\beta}_{c}$ are
greater than zero. In the second part we prove the theorem for $\overline{\beta}_{c}$
belonging to the boundary of $\mathfrak{t_{+}}$, i.e. when $|\{k:\,\beta_{c,k}=0\}|\neq0.$ 
\begin{description}

\item [{a}] Let $\left(\overline{\alpha}^{(1)},\overline{\alpha}^{(2)},\dots,\overline{\alpha}^{(k)}\right)$
be the set of vertices, for which the convex hull $\overline{\beta}_{c}=\sum_{i=1}^{k}a_{i}^{c}\overline{\alpha}^{(i)}$
is the closest point to the origin. Moreover, assume that $\beta_{c,m}>0$
for all $m$. One can construct state 
\[
\ket{\Psi}=\sum_{i=1}^{k}\sqrt{a_{i}^{c}}\ket{i_{1}}\otimes \ket{i_{2}}\otimes\dots\otimes \ket{i_{L}},
\]
where $\mu_{T}\left(\ket{i_{1}}\otimes \ket{i_{2}}\otimes\dots\otimes \ket{i_{L}}\right)=\overline{\alpha}^{(i)}$.
From lemma \ref{lemma_zbeta} we know that $\ket{\Psi}\in Z_{\beta_{c}}$
and from lemma \ref{lemma_state} it follows that the one-qubit reduced
density matrices of $\ket{\Psi}$ are diagonal. Thus $\mu\left(\ket{\Psi}\right)=\mu_{T}\left(\ket{\Psi}\right)=\overline{\beta}_{c}$.
Therefore, $\ket{\Psi}\in Z_{\beta_{c}}\cap\mu^{-1}\left(\overline{\beta_{c}}\right)$.

\item [{b}] Now let $\mathbb{A}_{\beta_c}=\left(\overline{\alpha}^{(1)},\overline{\alpha}^{(2)},\dots,\overline{\alpha}^{(k)}\right)$
be the set of all vertices that belong to the hyperplane, for which
$\overline{\beta}_{c}=\sum_{i=1}^{k}a_{i}^{c}\overline{\alpha}^{(i)}$
is the closest point to zero. Here, we consider the case when $M$
out of $L$ components of $\overline{\beta}_{c}$ are nonzero.
By permutation symmetry we can assume that $\beta_{c,l}>0$ for $l\leq M$
and $\beta_{c,l}=0$ for $l>M$. Let us  construct $M$-dimensional vector
$\overline{\tilde{\beta}}_{c}$ such that $\tilde{\beta}_{c,m}=\beta_{c,m}$,
i.e. all components of $\overline{\tilde{\beta}}_{c}$ are greater
than zero. Similarly, construct $\tilde{\mathbb{A}}$, i.e. the
set of $k$ $M$-dimensional vectors, where $\overline{\tilde{\alpha}}_{m}^{(i)}=\overline{\alpha}_{m}^{(i)}$.
Then from the fact that $\beta_{c,l}=0$ for $l>M$ it follows that
$\overline{\tilde{\alpha}}^{(i)}\cdotp\overline{\tilde{\beta}}_{c}=\overline{\alpha}^{(i)}\cdotp\overline{\beta}_{c}=\left\Vert \overline{\tilde{\beta}}_{c}\right\Vert ^{2}$
for all $i$. Therefore, $\overline{\tilde{\beta}}_{c}$ is the closest
to zero point for the convex hull of set $\tilde{\mathbb{A}}_{k}$,
but for $M$ qubits. By the construction of the algorithm
for finding minimal combinations of weights we know that it is possible
to choose a set $\mathbb{B}\subset\tilde{\mathbb{A}}$ of at most
$M$ vertices, for which convex hull $\overline{\tilde{\beta}}_{c}$
is still the closest to zero point. Let us emphasize the fact that $|\mathbb{B}|\leq M$,
which will be very important for our proof. Denote by $\overline{\beta}^{(i)}$
the $i$th vertex from set $\mathbb{B}$. Then $\overline{\tilde{\beta}}_{c}=\sum_{i=1}^{|\mathbb{B}|}b_{i}^{c}\overline{\beta}^{(i)}$
and we can apply the same reasoning as in part $\textbf{a}$ and construct
$M$ qubit state $\ket{\Psi}\in Z_{\tilde{\beta}_{c}}$ such
that $\mu\left(\ket{\Psi}\right)=\mu_{T}\left(\ket{\Psi}\right)=\overline{\tilde{\beta}}_{c}$.
Namely 
\begin{equation}
\ket{\Psi}=\sum_{i=1}^{|\mathbb{B}|}\sqrt{b_{i}^{c}}\ket{i_{1}}\otimes \ket{i_{2}}\otimes\dots\otimes \ket{i_{M}},\label{eq:psi2}
\end{equation}
where $\mu_{T}\left(\ket{i_{1}}\otimes \ket{i_{2}}\otimes\dots\otimes \ket{i_{M}}\right)=\overline{\beta}^{(i)}$.
Our goal will be now to use $\left[\Psi\right]$ in order to construct
a $L$ qubit state, $\ket{\Phi}$, that belongs to $Z_{\beta_{c}}\cap\mu^{-1}\left(\overline{\beta_{c}}\right)$.
Equivalently, we want $\ket{\Phi}$ to be an eigenvector of
matrix
\[ \fl
k_{t}=\pmatrix{e^{-i\beta_{c,1}t} & 0\cr
0 & e^{i\beta_{c,1}t}
}\otimes\dots\otimes\pmatrix{e^{-i\beta_{c,M}t} & 0\cr
0 & e^{i\beta_{c,M}t}
}\otimes\pmatrix{1 & 0\cr
0 & 1
}\otimes\dots\otimes\pmatrix{1 & 0\cr
0 & 1
},
\]
where $t\in \mathbb{R}$ and to satisfy $\mu(\ket{\Phi})=\mu_{T}(\ket{\Phi})=\overline{\beta}_{c}$.
We will search for $\ket{\Phi}$ of the form
\begin{equation}
\ket{\Phi}=\frac{1}{\sqrt{2}}\left(\ket{\Psi}\otimes \ket{0}\otimes\dots\otimes \ket{0}+\ket{\Psi^{\perp}}\otimes \ket{1}\otimes\dots\otimes \ket{1}\right),\label{eq:phi_state}
\end{equation}
where$\ket{\Psi^{\perp}}$ is perpendicular to $\ket{\Psi}$,
belongs to $Z_{\tilde{\beta}_{c}}$ and 
\begin{equation}
\mu(\ket{\Psi^{\perp}})=\mu_{T}(\ket{\Psi^{\perp}})=\overline{\tilde{\beta}}_{c}.
\end{equation}
When $\ket{\Psi^{\perp}}$ satisfies all these conditions,
it is a matter of straightforward calculation to see that the one-qubit
reduced density matrices of $\ket{\Phi}$ are
\begin{equation}
\rho_{l}\left(\ket{\Phi}\right)=\rho_{l}\left(\ket{\Psi}\right)\ \ \mathrm{for}\ i=1,\dots,M\ \ \mathrm{and}\ \rho_{l}\left(\ket{\Phi}\right)=\pmatrix{\frac{1}{2} & 0\cr
0 & \frac{1}{2}
}\  \label{eq:1qrdm_phi}
\end{equation}
for $i=M+1,\dots,L$, i.e. $\mu(\ket{\Phi})=\mu_{T}(\ket{\Phi})=\overline{\beta}_{c}$.
Moreover, $\ket{\Phi}$ is an eigenvector of $k_{t}$, because
both $\ket{\Psi^{\perp}}$ and $\ket{\Psi}$ belong
to $Z_{\tilde{\beta}_{c}}$ and the part of $k_{t}$ acting on the
last $L-M$ qubits is identity.

One way to find desired $\ket{\Psi^{\perp}}$ is to construct
such $k\in T=\mathrm{Stab}\left(\mu_{T}\left(\ket{\Psi}\right)\right)$ that $\ket{k.\Psi}=\ket{\Psi^{\perp}}$.
General form of such $k$ is
\[
k=\pmatrix{e^{-i\phi_{1}} & 0\cr
0 & e^{i\phi_{1}}
}\otimes\pmatrix{e^{-i\phi_{2}} & 0\cr
0 & e^{i\phi_{2}}
}\otimes\dots\otimes\pmatrix{e^{-i\phi_{M}} & 0\cr
0 & e^{i\phi_{M}}
}.
\]
Therefore, the action of $k$ on $\left[\Psi\right]$ gives
\[
\fl k.\ket{\Psi}=\sum_{i=1}^{|\mathbb{B}|}\sqrt{b_{i}^{c}}\left(\prod_{k=1}^{M}e^{i\sigma_{i_{k}}\phi_{k}}\right)\ket{i_{1}}\otimes \ket{i_{2}}\otimes\dots\otimes \ket{i_{M}},\ \mathrm{where}\ \sigma_{i_{k}}=(-1)^{(1-i_{k})}.
\]
The condition $\ket{\Psi}\perp \ket{k.\Psi}$ then reads
\begin{equation}
\sum_{i=1}^{|\mathbb{B}|}b_{i}^{c}\left(\prod_{k=1}^{M}e^{i\sigma_{i_{k}}\phi_{k}}\right)=0.\label{eq:perp_cond}
\end{equation}
Equation (\ref{eq:perp_cond}) can be viewed as a problem of constructing
a polygon with the lengths of the sides given and the angles parametrized
by $\{\phi_{k}\}_{k=1}^{M}$. In other words, we have a set of complex
numbers given by
\[
z_{j}=b_{j}^{c}\exp\left(i\gamma_{j}\right),\ \mathrm{where}\ \gamma_{j}=\sum_{k=1}^{M}\sigma_{j_{k}}\phi_{k}
\]
and we want them to add up to zero (Fig.\ref{fig:polygon}). 
\begin{figure}[H]
\centering \includegraphics[width=0.7\textwidth]{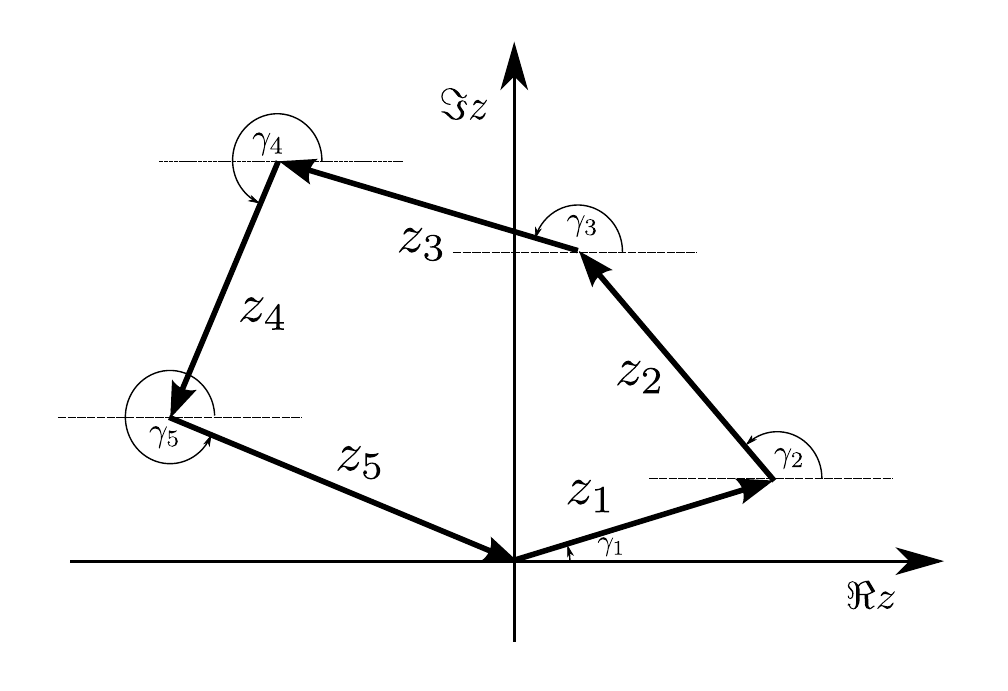} \caption{\label{fig:polygon} An example of the geometrical interpretation
of equation (\ref{eq:perp_cond}).}
\end{figure}
The conditions sufficient to construct such polygon are: 

\begin{enumerate}
\item the length of the longest side is smaller than the sum of the lengths
of the remaining sides, 
\item there are at least 2 sides and at most $M$ sides. 
\end{enumerate}

The second condition means that the number of the angles of the polygon
cannot be greater than the number of variables $\{\phi_{k}\}_{k=1}^{M}$
parameterizing those angles. Note that the number of the sides of the
polygon is equal to the number of the separable states that span state
$\ket{\Psi}$. On the other hand, $\ket{\Psi}$ belongs to the span of $|\mathbb{B}|$
separable states (see (\ref{eq:psi2})), which is less or equal
to $M$. Hence, the second condition is satisfied, provided that $\overline{\beta}_{c}\neq\left(\frac{1}{2},\frac{1}{2},\dots,\frac{1}{2},0\right)$. If $\overline{\beta}_{c}=\left(\frac{1}{2},\frac{1}{2},\dots,\frac{1}{2},0\right)$, then $\ket{\Psi}=\ket{11\dots1}$ and there is only one vector on figure \ref{fig:polygon}, which can't be made equal to zero.  In other words, $\ket{k.\Psi}=\ket{\Psi}$ for all $k\in T$. The following lemma shows that the first condition is satisfied as well.
\end{description}
\end{proof}

\begin{lemma} For a critical state $\ket{\Psi}=\sum_{i}c_{i}\ket{i_{1}}\otimes \ket{i_{2}}\otimes\dots\otimes \ket{i_{M}}$
whose all one-qubit reduced density matrices are not maximally mixed
($\beta_{c,k}>0$ for all $k$), the set $\{|c_{i}|^{2}\}$ satisfies
the polygon inequality, i.e. if $|c_{\mathrm{max}}|^{2}:={\displaystyle \max_{i}|c_{i}|^{2}}$,
then 
\begin{equation}
|c_{\mathrm{max}}|^{2}\leq\sum_{i\neq\mathrm{max}}|c_{i}|^{2}.\label{polygon}
\end{equation}
\end{lemma}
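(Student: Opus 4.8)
The plan is to recast the polygon inequality (\ref{polygon}) as a bound on the coefficients of the underlying minimal combination of weights, and then to play off two facts we already possess: the formula (\ref{rho_l-1}) expressing the components of $\mu_T(\ket{\Psi})$ through the $|c_i|^2$, and the equidistance relation (\ref{lin1}) characterising the closest-to-zero point. First I would set $a_i:=|c_i|^2$. Since the reduced density matrices of $\ket{\Psi}$ are diagonal, the computation in the proof of lemma \ref{lemma_zbeta} gives $\overline{\beta}_c=\mu_T(\ket{\Psi})=\sum_i a_i\,\overline{\alpha}^{(i)}$ with $\sum_i a_i=1$ and $a_i>0$, so $\overline{\beta}_c$ is exactly the minimal combination and the $a_i$ are its convex coefficients. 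Writing $|c_{\max}|^2=a_{\max}=\max_i a_i$ and using $\sum_{i\neq\max}|c_i|^2=1-a_{\max}$, inequality (\ref{polygon}) is equivalent to $a_{\max}\leq\frac{1}{2}$. Thus the whole lemma reduces to showing that the largest coefficient in the minimal combination never exceeds one half.

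Next I would bring in the hypothesis $\beta_{c,m}>0$. By (\ref{rho_l-1}) the $m$-th component of $\overline{\beta}_c$ equals $\sum_{i:\,i_m=1}a_i-\frac{1}{2}$, so requiring $\beta_{c,m}>0$ for every $m$ is the same as $\sum_{i:\,i_m=0}a_i<\frac{1}{2}$ for every coordinate $m$. Now inspect the basis string $(i_1,\dots,i_M)$ of the term carrying the largest coefficient $a_{\max}$. If that string has a $0$ in some slot $m$, then $a_{\max}$ is one of the summands of $\sum_{i:\,i_m=0}a_i$, whence $a_{\max}\leq\sum_{i:\,i_m=0}a_i<\frac{1}{2}$ and (\ref{polygon}) follows at once, in fact strictly.

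The main obstacle is the complementary case, where the heaviest vertex is the all-plus vertex $\mathbf{v}=(\frac{1}{2},\dots,\frac{1}{2})$ -- corresponding to $\ket{1\cdots1}$ -- for which the estimate above is vacuous. To dispose of it I would substitute $\mathbf{v}$ into the equidistance relation. Contracting (\ref{lin1}) with $a_l$ and summing over $l$ yields $\|\overline{\beta}_c\|^2=-\lambda/2$, hence $\overline{\beta}_c\cdot\overline{\alpha}^{(i)}=\|\overline{\beta}_c\|^2$ for every vertex entering the combination; applied to $\mathbf{v}$ this reads $\frac{1}{2}\sum_m\beta_{c,m}=\sum_m\beta_{c,m}^2$, i.e. $\sum_m\beta_{c,m}\bigl(\frac{1}{2}-\beta_{c,m}\bigr)=0$. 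Since $0<\beta_{c,m}\leq\frac{1}{2}$, every summand is non-negative, which forces $\beta_{c,m}=\frac{1}{2}$ for all $m$ and therefore $\overline{\beta}_c=\mathbf{v}$. But $\mathbf{v}$ is an extreme point of the hypercube, so it admits no non-trivial representation as a convex combination of distinct vertices; the combination then degenerates to the single term $\mathbf{v}$, i.e. $\ket{\Psi}=\ket{1\cdots1}$.

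Consequently, as soon as $\ket{\Psi}\neq\ket{1\cdots1}$ the all-plus vertex cannot enter the combination, so the heaviest vertex has at least one $0$ entry and the second paragraph already gives $a_{\max}<\frac{1}{2}$, which is (\ref{polygon}). I expect the only delicate point to be precisely this interplay with the degenerate state $\ket{1\cdots1}$: it is the one configuration for which $\beta_{c,m}>0$ holds yet (\ref{polygon}) fails (there $a_{\max}=1$), and it is exactly the state that was already set aside in the proof of theorem \ref{z_beta_theorem}. Away from it the argument in fact delivers the strict polygon inequality, which is what guarantees that the auxiliary polygon of Figure \ref{fig:polygon} is non-degenerate and can be closed.
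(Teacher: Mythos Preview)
Your argument is correct and shares the paper's core idea: both show that the only way the heaviest coefficient can escape the bound is for it to sit on $\ket{1\cdots1}$, and both then invoke the constraint coming from $\ket{\Psi}\in Z_{\beta_c}$. The paper argues by contradiction and reaches the final step by comparing two eigenvalues of $k_t$ --- the one attached to $\ket{1\cdots1}$ versus the common eigenvalue $\exp(2it\|\overline{\beta}_c\|^2)$ of $Z_{\beta_c}$; you instead do a direct case split on whether the heaviest basis string contains a $0$, and in the all-ones case use the equivalent dot-product identity $\overline{\beta}_c\cdot\mathbf v=\|\overline{\beta}_c\|^2$ to obtain $\sum_m\beta_{c,m}\bigl(\tfrac12-\beta_{c,m}\bigr)=0$. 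Your handling of this step is in fact sharper: you explicitly deduce $\beta_{c,m}=\tfrac12$ for all $m$ and isolate $\ket{1\cdots1}$ as the sole exception, whereas the paper's blanket claim that the two eigenvalues ``are never equal'' glosses over precisely this boundary case. You also get the strict inequality $a_{\max}<\tfrac12$ away from the degenerate state, which is exactly what is needed to close the polygon in Theorem~\ref{z_beta_theorem}. One cosmetic remark: the relation $\overline{\beta}_c\cdot\overline{\alpha}^{(i)}=\|\overline{\beta}_c\|^2$ for every vertex appearing in $\ket{\Psi}$ is most directly read off from $\ket{\Psi}\in Z_{\beta_c}$ and the definition of $\mathbb{A}_{\beta_c}$ in lemma~\ref{lemma_zbeta}, rather than from the Lagrange identity~(\ref{lin1}).
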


\begin{proof} Let us assume that (\ref{polygon}) is not satisfied,
i.e. that there exists such $c_{\mathrm{max}}$ that 
\begin{equation}
|c_{\mathrm{max}}|^{2}>\sum_{i\neq\mathrm{max}}|c_{i}|^{2}.\label{contr}
\end{equation}
We will show that this leads to a contradiction, i.e. a state for
which (\ref{contr}) is true cannot be critical. First, let us write
matrix elements of each one-qubit density matrix 
\begin{eqnarray}
\rho_{k}^{(00)}\left(\ket{\Psi}\right)=\sum_{i:\, i_{k}=0}|c_{i}|^{2},\label{rho0}\\
\rho_{k}^{(11)}\left(\ket{\Psi}\right)=\sum_{i:\, i_{k}=1}|c_{i}|^{2}.\label{rho1}
\end{eqnarray}
The condition $\beta_{c,k}>0$ is equivalent to 
\begin{equation}
\rho_{k}^{(11)}\left(\ket{\Psi}\right)>\rho_{k}^{(00)}\left(\ket{\Psi}\right).\label{rho}
\end{equation}
By (\ref{contr}), inequality (\ref{rho}) can
be satisfied only if sum (\ref{rho1}) includes term $|c_{\mathrm{max}}|^{2}$
, i.e. 
\begin{equation}
\rho_{k}^{(11)}\left(\ket{\Psi}\right)=|c_{\mathrm{max}}|^{2}+\sum_{i\neq\mathrm{max},\, i:\, i_{k}=1}|c_{i}|^{2}.\label{rho_1p}
\end{equation}
Note that according to equation (\ref{rho_1p}) $|c_{\mathrm{max}}|^{2}$
appears in the expression for the matrix element $\rho_{k}^{(11)}\left(\ket{\Psi}\right)$
of each one-qubit reduced density matrix. Therefore, $c_{\mathrm{max}}$
is the coefficient multiplying the basis vector $\ket{1}\otimes \ket{1}\otimes\dots\otimes \ket{1}$.
This implies that $\ket{\Psi}$ belongs to the eigenspace of $k_{t}$
with eigenvalue $\exp\left(-it\sum_{k=1}^{M}\beta_{c,k}\right)$ (for
calculation, see equation (\ref{eq:exp_coefficient-1}) from lemma \ref{lemma_zbeta}).
On the other hand, one can checks by direct calculation that $\ket{\Psi}$
belongs to the eigenspace of $k_{t}$ with eigenvalue $\exp\left(2it||\overline{\beta_{c}}||^{2}\right)$
(see again the proof of lemma \ref{lemma_zbeta}). This is a contradiction,
because those two eigenvalues are never equal.\end{proof}

\paragraph*{Example of constructing a critical state} Here we describe how to apply the above theorems concerning the construction of the $Z_\beta$ set and finding an exemplary critical state. Let us focus on a $4$ qubit case. Assume that we have found a $3$ qubit critical state $\ket{\phi_W}=\frac{1}{\sqrt{3}}\left(\ket{110}+\ket{101}+\ket{011}\right)$ with $\mu(\phi_W)=\mu_T(\phi_W)=\left(\frac{1}{6},\frac{1}{6},\frac{1}{6}\right)$ (see Table \ref{tab:3qubit_old}). From lemma \ref{lemma:iter} we know that $\overline\beta=\left(\frac{1}{6},\frac{1}{6},\frac{1}{6},0\right)$ is the closest to zero point of the convex hull of some vertices from the $4$-qubit hypercube. The set $\mathbb{A}_\beta$ of vertices contained in the hyperplane perpendicular to $\beta$ is given by
\begin{eqnarray*}
\mathbb{A}_\beta=\Bigg{\{}\left(\frac{1}{2},\frac{1}{2},-\frac{1}{2},\frac{1}{2}\right),\left(\frac{1}{2},-\frac{1}{2},\frac{1}{2},\frac{1}{2}\right),\left(-\frac{1}{2},\frac{1}{2},\frac{1}{2},\frac{1}{2}\right), \\ \left(\frac{1}{2},\frac{1}{2},-\frac{1}{2},-\frac{1}{2}\right),\left(\frac{1}{2},-\frac{1}{2},\frac{1}{2},-\frac{1}{2}\right),\left(-\frac{1}{2},\frac{1}{2},\frac{1}{2},-\frac{1}{2}\right)\Bigg{\}}.
\end{eqnarray*}
Indeed, one can easily check that $\overline\alpha^{(i)}\cdot\overline\beta=||\overline\beta||^2=\frac{1}{12}$ for all $\overline\alpha^{(i)}\in\mathbb{A}_\beta$. From lemma \ref{lemma_zbeta} one has that $Z_\beta=\mathbb{P}(\mathcal{S})$, where
\begin{equation*}
\mathcal{S}=\mathrm{Span}_{\mathbb{C}}\{\ket{1101},\ket{1011},\ket{0111},\ket{1100},\ket{1010},\ket{0110}\}.
\end{equation*}
Let us next construct a critical state $\ket{\Phi}\in\mu^{-1}(\beta)$ following the proof of theorem \ref{z_beta_theorem}. From the reduction procedure described at the beginning of part \textbf{b} of the proof, set $\mathrm{A}_\beta$ is reduced to the subset of the $L$-qubit hypercube's vertices
\begin{equation*}
\tilde\mathbb{A}=\Bigg{\{}\left(\frac{1}{2},\frac{1}{2},-\frac{1}{2}\right),\left(\frac{1}{2},-\frac{1}{2},\frac{1}{2}\right),\left(-\frac{1}{2},\frac{1}{2},\frac{1}{2}\right)\Bigg{\}}.
\end{equation*}
The closest to zero point of $\mathrm{conv}(\tilde\mathbb{A})$ is again $\overline{\tilde\beta}=\left(\frac{1}{6},\frac{1}{6},\frac{1}{6}\right)=\sum_{\overline\alpha\in\tilde\mathbb{A}}\frac{1}{3}\overline\alpha$. Therefore, by equation (\ref{eq:psi2}), a critical state from the preimage of $\overline{\tilde\beta}$ is equal to $\ket{\phi_W}$, as one could expect. The final step is to construct such set of diagonal local unitary matrices $k\in T$ that $\bk{\phi_W}{k.\phi_W}=0$. By equation (\ref{eq:perp_cond}) it is equivalent to the problem of finding such angles $\gamma_i,\ i=1,2,3$ that complex numbers $z_j=\frac{1}{3}\e^{i\gamma_j}$ add up to zero. One of the possible solutions is $\gamma_1=0,\ \gamma_2=\frac{2\pi}{3},\ \gamma_3=\frac{4\pi}{3}$, which corresponds to the angles in the equilateral triangle (see Figure \ref{fig:polygon}). Then $\ket{k.\phi_W}=\frac{1}{\sqrt{3}}\left(\ket{110}+e^{\frac{2\pi i}{3}}\ket{101}+e^{\frac{4\pi i}{3}}\ket{011}\right)$ and, finally from equation (\ref{eq:phi_state}), a critical state from $\mu^{-1}(\beta)$ is of the form
\begin{equation*}
\fl\ket{\phi}=\frac{1}{\sqrt{6}}\left[(\ket{110}+\ket{101}+\ket{011})\otimes\ket{0}+(\ket{110}+e^{\frac{2\pi i}{3}}\ket{101}+e^{\frac{4\pi i}{3}}\ket{011})\otimes\ket{1}\right].
\end{equation*}

\section{Summary}
The geometric interpretation of the linear entropy as a squared norm of the momentum map stemming from
the action of the local unitary group enabled us to construct an easily implementable algorithm for finding maximally entangled pure many-qubit states. The main role in this construction is played by the fact that the critical points of $||\mu||^2$ are the $K$-orbits through such $L$-qubit critical states of $||\mu_T||^2$ whose images under both moment maps are the same, i.e. whose one-qubit reduced density matrices are diagonal. The algorithm is in essence a description of collections of $2\times2$ diagonal density matrices, identified with euclidean space $\mathbb{R}^L$, that are such images. We now give an explicit form of the algorithm.

\begin{table}[H]
\centering
\begin{tabular}{lll}
\hline
\multicolumn{3}{c}{\rule[-0.2cm]{0pt}{0.6cm} \textbf{The algorithm for finding critical values of the linear entropy}}\\
\hline
\multicolumn{1}{r}{Input:} & \multicolumn{2}{l}{\rule[-0.2cm]{0pt}{0.6cm}The number of qubits, $L$} \\
\multicolumn{1}{r}{Output:} & \multicolumn{2}{l}{\rule[-0.2cm]{0pt}{0.6cm}Set of critical values of the linear entropy} \\
\hline
\hline
\multirow{2}{*}{\textbf{1}} & \multicolumn{2}{l}{\rule[-0.cm]{0pt}{0.4cm}Construct a list of the hypercube's vertices, i.e. all vectors of length $L$}   \\
 & \multicolumn{2}{l}{whose elements are equal to $\pm \frac{1}{2}$.} \\
\textbf{2} & \multicolumn{2}{l}{For $k\in\{2,\dots,L\}$:} \\
 & \multirow{2}{*}{\textbf{2a}} & Choose a set of $k$ linearly independent vertices of the hypercube, \\
 & & $\{\overline\alpha^{(1)},\dots,\overline\alpha^{(k)}\}$. \\
 & \textbf{2b} & Solve the set of equations (\ref{main_eq}) to obtain coefficients $a_1^c,\dots,a_k^c$ and $\lambda$. \\
 & \textbf{2c} & Construct vector $\overline\beta_c=\sum_i a_i^c\overline\alpha^{(i)}$. \\
 & \multirow{3}{*}{\textbf{2d}} & If all elements of $\overline\beta_c$ are non-negative and $\overline{\beta}_{c}\neq\left(\frac{1}{2},\frac{1}{2},\dots,\frac{1}{2},0\right)$, up to  \\
 & & permutations of vector components, $E_c=\frac{1}{2}+\frac{\lambda}{L}$ is a critical value \\
 & & of the linear entropy. \\
 & \textbf{2e} & Go to 2a, unless all combinations of $k$ vertices have been used. \\
\textbf{3} &  \multicolumn{2}{l}{Add $E_c=0$ corresponding to $\overline\beta_c=\left(\frac{1}{2},\frac{1}{2},\dots,\frac{1}{2}\right)$ to the list of critical values.} \\
\textbf{4} & \multicolumn{2}{l}{Return the list of critical values.} \\
\hline
\end{tabular}
\end{table}

An exemplary state from the preimage of each point generated by this algorithm can be constructed, following the proof of theorem \ref{z_beta_theorem}. In general, the preimages under $\mu$ of collections of diagonal one-qubit reduced density matrices are multidimensional varieties, whose dimensions we calculate in \cite{MOS}. In Figure \ref{fig:calculations} we present the results of calculations for numbers of qubits up to $7$. The number of critical values increases super-exponentially with $L$, as can be easily seen from the plot.

\begin{figure}[H]
\centering
\subfloat{\includegraphics[width=.75\textwidth]{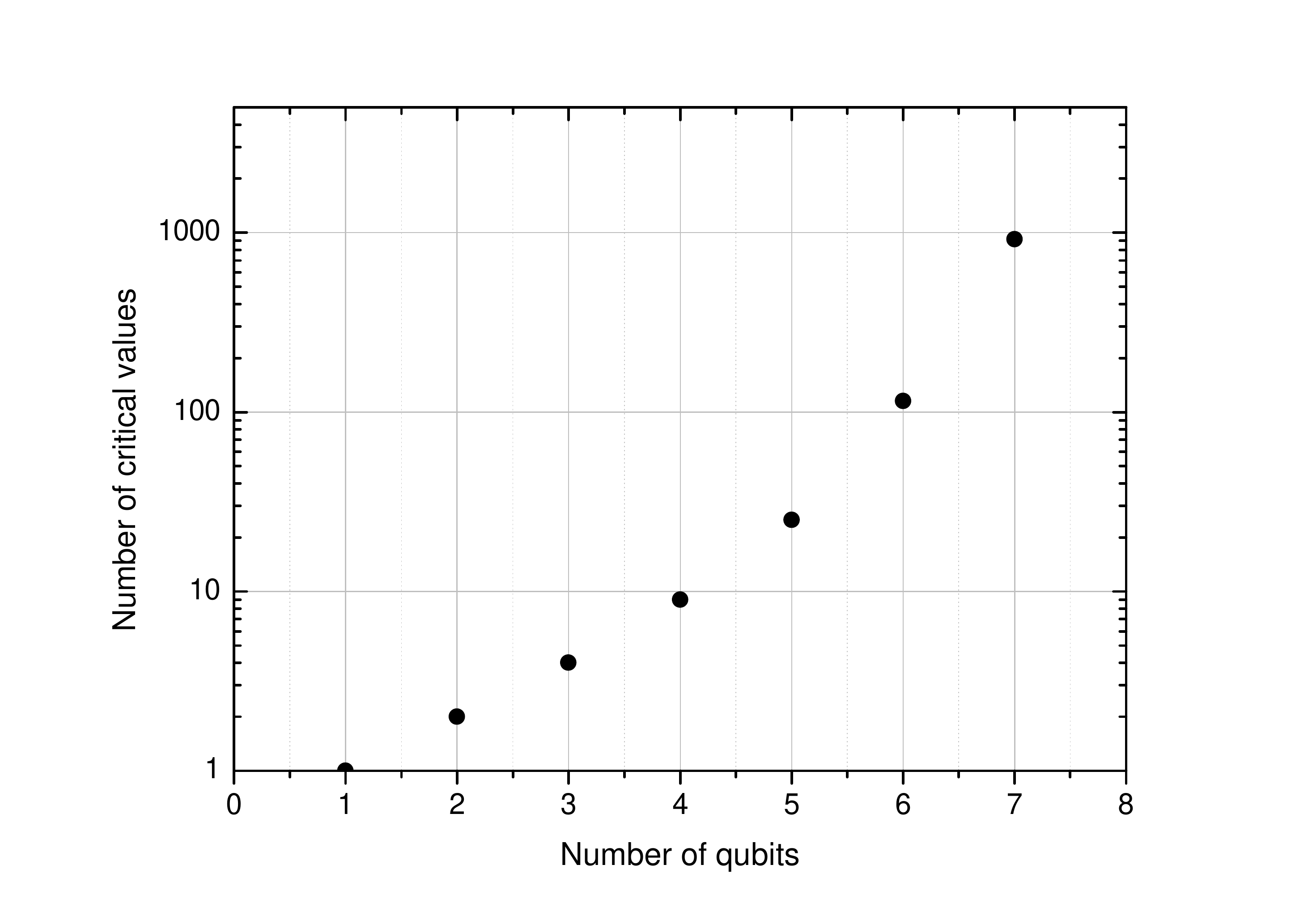}} \\
\subfloat{
\begin{tabular}{c|ccccccc}
$L$ & 1 & 2 & 3 & 4 & 5 & 6 & 7 \\
\hline
number of critical values & 1 & 2 & 4 & 9  & 25  & 115 & 921 \\
\end{tabular}}
\caption{Results of calculations for $L\leq7$. The plot indicates that the number of critical values and therefore the numerical complexity of the algotithm increases superexponentially with $L$.}
\label{fig:calculations}
\end{figure}

The presented algorithm reveals a close relation of the linear entropy critical points to the problem of classifying hyperplanes in hypercube \cite{hyperplanes}. Namely, each minimal combination of weights is a vector perpendicular to a hyperplane, which contains some vertices of the hypercube. For example, the minimal combinations of weights that are proportional to vector $(1,1,\dots,1)$ correspond to hyperplanes whose intersection with the hypercube coincides with the convex hull of the vertices they contain. As explained in \cite{hyperplanes} one encounters fundamentally new types of hyperplanes with the rise of hypercube's dimension. Therefore, the problem of developing general scheme of enumeration of the hyperplanes is difficult and remains still open.

\begin{table}[H]
\centering
\begin{tabular}{c|ccc}
\multicolumn{1}{c}{number of qubits} & $||\overline{\beta}_{c}||^{2}$ & $\overline{\beta}_{c}$ & spanning vertices\\
\hline 
\hline 
\multirow{2}{*}{3} & \multirow{2}{*}{$\frac{1}{12}$} & \multirow{2}{*}{$\left(\frac{1}{6},\frac{1}{6},\frac{1}{6}\right)$} &\sr $\left(-\frac{1}{2},\frac{1}{2},\frac{1}{2}\right)$,$\left(\frac{1}{2},-\frac{1}{2},\frac{1}{2}\right)$\\
 &  &  &\sr $\left(\frac{1}{2},\frac{1}{2},-\frac{1}{2}\right)$\\
\hline 
\multirow{6}{*}{4} & \multirow{2}{*}{$\frac{1}{28}$} & \multirow{2}{*}{$\left(\frac{1}{7},\frac{1}{14},\frac{1}{14},\frac{1}{14}\right)$} &\sr $\left(-\frac{1}{2},\frac{1}{2},\frac{1}{2},\frac{1}{2}\right)$,$\left(\frac{1}{2},-\frac{1}{2},-\frac{1}{2},\frac{1}{2}\right)$\\
 &  &  &\sr $\left(\frac{1}{2},-\frac{1}{2},\frac{1}{2},-\frac{1}{2}\right)$,$\left(\frac{1}{2},\frac{1}{2},-\frac{1}{2},-\frac{1}{2}\right)$\\
\cline{2-4} 
 & \multirow{2}{*}{$\frac{1}{10}$} & \multirow{2}{*}{$\left(\frac{1}{5},\frac{1}{5},\frac{1}{10},\frac{1}{10}\right)$} &\sr $\left(-\frac{1}{2},\frac{1}{2},\frac{1}{2},\frac{1}{2}\right)$,$\left(\frac{1}{2},-\frac{1}{2},\frac{1}{2},\frac{1}{2}\right)$\\
 &  &  &\sr $\left(\frac{1}{2},\frac{1}{2},-\frac{1}{2},-\frac{1}{2}\right)$\\
\cline{2-4} 
 & \multirow{2}{*}{$\frac{1}{4}$} & \multirow{2}{*}{$\left(\frac{1}{4},\frac{1}{4},\frac{1}{4},\frac{1}{4}\right)$} &\sr $\left(-\frac{1}{2},\frac{1}{2},\frac{1}{2},\frac{1}{2}\right)$,$\left(\frac{1}{2},-\frac{1}{2},\frac{1}{2},\frac{1}{2}\right)$\\
 &  &  &\sr $\left(\frac{1}{2},\frac{1}{2},-\frac{1}{2},\frac{1}{2}\right)$,$\left(\frac{1}{2},\frac{1}{2},\frac{1}{2},-\frac{1}{2}\right)$\\
\hline 
\multirow{28}{*}{5} & \multirow{3}{*}{$\frac{1}{76}$} & \multirow{3}{*}{$\left(\frac{3}{38},\frac{1}{19},\frac{1}{19},\frac{1}{38},\frac{1}{38}\right)$} &\sr $\left(-\frac{1}{2},\frac{1}{2},\frac{1}{2},-\frac{1}{2},\frac{1}{2}\right)$,$\left(-\frac{1}{2},\frac{1}{2},\frac{1}{2},\frac{1}{2},-\frac{1}{2}\right)$\\
 &  &  &\sr $\left(\frac{1}{2},-\frac{1}{2},-\frac{1}{2},\frac{1}{2},\frac{1}{2}\right)$,$\left(\frac{1}{2},-\frac{1}{2},\frac{1}{2},-\frac{1}{2},-\frac{1}{2}\right)$\\
 &  &  &\sr $\left(\frac{1}{2},\frac{1}{2},-\frac{1}{2},-\frac{1}{2},-\frac{1}{2}\right)$\\
\cline{2-4} 
 & \multirow{3}{*}{$\frac{1}{52}$} & \multirow{3}{*}{$\left(\frac{3}{26},\frac{1}{26},\frac{1}{26},\frac{1}{26},\frac{1}{26}\right)$} &\sr $\left(-\frac{1}{2},\frac{1}{2},\frac{1}{2},\frac{1}{2},\frac{1}{2}\right)$,$\left(\frac{1}{2},-\frac{1}{2},-\frac{1}{2},-\frac{1}{2},\frac{1}{2}\right)$\\
 &  &  &\sr $\left(\frac{1}{2},-\frac{1}{2},-\frac{1}{2},\frac{1}{2},\frac{1}{2}\right)$,$\left(\frac{1}{2},-\frac{1}{2},\frac{1}{2},-\frac{1}{2},-\frac{1}{2}\right)$\\
 &  &  &\sr $\left(\frac{1}{2},\frac{1}{2},-\frac{1}{2},-\frac{1}{2},-\frac{1}{2}\right)$\\
\cline{2-4} 
 & \multirow{3}{*}{$\frac{1}{44}$} & \multirow{3}{*}{$\left(\frac{1}{11},\frac{1}{11},\frac{1}{22},\frac{1}{22},\frac{1}{22}\right)$} &\sr $\left(-\frac{1}{2},\frac{1}{2},\frac{1}{2},-\frac{1}{2},\frac{1}{2}\right)$,$\left(-\frac{1}{2},\frac{1}{2},\frac{1}{2},\frac{1}{2},-\frac{1}{2}\right)$\\
 &  &  &\sr $\left(\frac{1}{2},-\frac{1}{2},\frac{1}{2},\frac{1}{2},-\frac{1}{2}\right)$,$\left(\frac{1}{2},-\frac{1}{2},-\frac{1}{2},\frac{1}{2},\frac{1}{2}\right)$\\
 &  &  &\sr $\left(\frac{1}{2},\frac{1}{2},-\frac{1}{2},\frac{1}{2},-\frac{1}{2}\right)$\\
\cline{2-4} 
 & \multirow{2}{*}{$\frac{1}{26}$} & \multirow{2}{*}{$\left(\frac{2}{13},\frac{1}{13},\frac{1}{13},\frac{1}{26},\frac{1}{26}\right)$} &\sr $\left(-\frac{1}{2},\frac{1}{2},\frac{1}{2},\frac{1}{2},\frac{1}{2}\right)$,$\left(\frac{1}{2},-\frac{1}{2},-\frac{1}{2},\frac{1}{2},\frac{1}{2}\right)$\\
 &  &  &\sr $\left(\frac{1}{2},-\frac{1}{2},\frac{1}{2},-\frac{1}{2},-\frac{1}{2}\right)$,$\left(\frac{1}{2},\frac{1}{2},-\frac{1}{2},-\frac{1}{2},-\frac{1}{2}\right)$\\
\cline{2-4} 
 & \multirow{2}{*}{$\frac{1}{20}$} & \multirow{2}{*}{$\left(\frac{1}{10},\frac{1}{10},\frac{1}{10},\frac{1}{10},\frac{1}{10}\right)$} &\sr $\left(-\frac{1}{2},\frac{1}{2},\frac{1}{2},-\frac{1}{2},\frac{1}{2}\right)$,$\left(-\frac{1}{2},\frac{1}{2},\frac{1}{2},\frac{1}{2},-\frac{1}{2}\right)$\\
 &  &  &\sr $\left(\frac{1}{2},-\frac{1}{2},-\frac{1}{2},\frac{1}{2},\frac{1}{2}\right)$,$\left(\frac{1}{2},\frac{1}{2},\frac{1}{2},-\frac{1}{2},-\frac{1}{2}\right)$\\
\cline{2-4} 
 & \multirow{3}{*}{$\frac{1}{16}$} & \multirow{3}{*}{$\left(\frac{3}{16},\frac{1}{8},\frac{1}{16},\frac{1}{16},\frac{1}{16}\right)$} &\sr $\left(-\frac{1}{2},\frac{1}{2},\frac{1}{2},\frac{1}{2},\frac{1}{2}\right)$,$\left(\frac{1}{2},-\frac{1}{2},-\frac{1}{2},\frac{1}{2},\frac{1}{2}\right)$\\
 &  &  &\sr $\left(\frac{1}{2},-\frac{1}{2},\frac{1}{2},-\frac{1}{2},\frac{1}{2}\right)$,$\left(\frac{1}{2},-\frac{1}{2},\frac{1}{2},\frac{1}{2},-\frac{1}{2}\right)$\\
 &  &  &\sr $\left(\frac{1}{2},\frac{1}{2},-\frac{1}{2},-\frac{1}{2},-\frac{1}{2}\right)$\\
\cline{2-4} 
 & \multirow{2}{*}{$\frac{3}{28}$} & \multirow{2}{*}{$\left(\frac{3}{14},\frac{3}{14},\frac{1}{14},\frac{1}{14},\frac{1}{14}\right)$} &\sr $\left(-\frac{1}{2},\frac{1}{2},\frac{1}{2},\frac{1}{2},\frac{1}{2}\right)$,$\left(\frac{1}{2},-\frac{1}{2},\frac{1}{2},\frac{1}{2},\frac{1}{2}\right)$\\
 &  &  &\sr $\left(\frac{1}{2},\frac{1}{2},-\frac{1}{2},-\frac{1}{2},-\frac{1}{2}\right)$\\
\cline{2-4} 
 & \multirow{2}{*}{$\frac{1}{8}$} & \multirow{2}{*}{$\left(\frac{1}{8},\frac{1}{8},\frac{1}{8},\frac{1}{4},\frac{1}{8}\right)$} &\sr $\left(-\frac{1}{2},\frac{1}{2},\frac{1}{2},\frac{1}{2},-\frac{1}{2}\right)$,$\left(\frac{1}{2},-\frac{1}{2},-\frac{1}{2},\frac{1}{2},\frac{1}{2}\right)$\\
 &  &  &\sr $\left(\frac{1}{2},\frac{1}{2},\frac{1}{2},-\frac{1}{2},\frac{1}{2}\right)$\\
\cline{2-4} 
 & \multirow{3}{*}{$\frac{9}{44}$} & \multirow{3}{*}{$\left(\frac{3}{11},\frac{3}{11},\frac{3}{22},\frac{3}{22},\frac{3}{22}\right)$} &\sr $\left(-\frac{1}{2},\frac{1}{2},\frac{1}{2},\frac{1}{2},\frac{1}{2}\right)$,$\left(\frac{1}{2},-\frac{1}{2},\frac{1}{2},\frac{1}{2},\frac{1}{2}\right)$\\
 &  &  &\sr $\left(\frac{1}{2},\frac{1}{2},-\frac{1}{2},-\frac{1}{2},\frac{1}{2}\right)$,$\left(\frac{1}{2},\frac{1}{2},-\frac{1}{2},\frac{1}{2},-\frac{1}{2}\right)$\\
 &  &  &\sr $\left(\frac{1}{2},\frac{1}{2},\frac{1}{2},-\frac{1}{2},-\frac{1}{2}\right)$\\
\cline{2-4} 
 & \multirow{2}{*}{$\frac{2}{7}$} & \multirow{2}{*}{$\left(\frac{2}{7},\frac{2}{7},\frac{2}{7},\frac{1}{7},\frac{1}{7}\right)$} &\sr $\left(-\frac{1}{2},\frac{1}{2},\frac{1}{2},\frac{1}{2},\frac{1}{2}\right)$,$\left(\frac{1}{2},-\frac{1}{2},\frac{1}{2},\frac{1}{2},\frac{1}{2}\right)$\\
 &  &  &\sr $\left(\frac{1}{2},\frac{1}{2},\frac{1}{2},-\frac{1}{2},\frac{1}{2}\right)$,$\left(\frac{1}{2},\frac{1}{2},\frac{1}{2},-\frac{1}{2},-\frac{1}{2}\right)$\\
\cline{2-4} 
 & \multirow{3}{*}{$\frac{9}{20}$} & \multirow{3}{*}{$\left(\frac{3}{10},\frac{3}{10},\frac{3}{10},\frac{3}{10},\frac{3}{10}\right)$} &\sr $\left(-\frac{1}{2},\frac{1}{2},\frac{1}{2},\frac{1}{2},\frac{1}{2}\right)$,$\left(\frac{1}{2},-\frac{1}{2},\frac{1}{2},\frac{1}{2},\frac{1}{2}\right)$\\
 &  &  &\sr $\left(\frac{1}{2},\frac{1}{2},-\frac{1}{2},\frac{1}{2},\frac{1}{2}\right)$,$\left(\frac{1}{2},\frac{1}{2},\frac{1}{2},-\frac{1}{2},\frac{1}{2}\right)$\\
 &  &  &\sr $\left(\frac{1}{2},\frac{1}{2},\frac{1}{2},\frac{1}{2},-\frac{1}{2}\right)$\\
\hline 
\end{tabular}
\caption{\label{tabelka2}The set $\mathcal{B}$ for $3$, $4$ and $5$ qubits }
\end{table}

\ack
We would like to thank Marek Ku\'s for encouragement and Valdemar Tsanov for many fruitful discussions at the Langeoog workshop of SFB/TR 12 in 2013 that referred us to Kirwan's method used in this paper. Tomasz Maci\c{a}\.{z}ek is supported by Polish Ministry of Science and Higher Education ``Diamentowy Grant'' no. DI2013 016543 and National Science Center grant no. NCN DEC-2011/01/M/ST2/00379. Adam Sawicki is supported by the Marie Curie International Outgoing Fellowships. We also gratefully acknowledge the support of SFB/TR12 Symmetries and Universality in Mesoscopic Systems program of the Deutsche Forschungsgemeischaft, ERC grant QOLAPS and the support of PL-Grid Infrastructure (http://www.plgrid.pl/en).

\section*{References}


\begin{thebibliography}{11}
\bibitem{SOK12} Sawicki, A., Oszmaniec, M., Ku\'s, M. \textit{Critical sets of the total variance can detect all stochastic local operations and classical communication classes of multiparticle entanglement} Phys.
Rev. A 86, 040304(R), 2012

\bibitem{SOK14} Sawicki, A., Oszmaniec, M., Ku\'s, M. \textit{Convexity of momentum map, Morse index, and quantum entanglement},  Rev. Math. Phys. 26, 1450004, 2014

\bibitem{K84-1} Kirwan, F. C. \textit{Cohomology of Quotients in Symplectic
and Algebraic Geometry}, Mathematical Notes, Vol. 31, Princeton Univ.
Press, Princeton, 1984

\bibitem{klyachko} Klyachko, A. \textit{Coherent states, entanglement, and geometric invariant theory}, arXiv:
quant-ph/0206012v1, 2002

\bibitem{example_cavity} Olaya-Castro, A., Johnson, N., F., Quiroga, L., \textit{Dynamics of quantum correlations and linear
entropy in a multi-qubit-cavity system}, J. Opt. B: Quantum Semiclass, pp. S730-S735, 2004

\bibitem{example_fermions} Buscemi, F., Bordone, P., Bertoni, A., \textit{Linear entropy as an entanglement measure in two-fermion systems}, Phys. Rev. A 75, 032301, 2007

\bibitem{example_ions} Chien-Hao, L., Yen-Chang, L., Yew Kam, H., \textit{Quantification of linear entropy for quantum entanglement in He, H- and Ps- ions using highly-correlated Hylleraas functions}, arXiv:1304.1741, 2013

\bibitem{Ness} Ness, L., \textit{A stratification of the null cone via the moment map [with an appendix by D. Mumford]}, Amer.
J. Math. 106(6), 1281–1329, 1984

\bibitem{GS} Guillemin, V., Sternberg, S., \textit{Symplectic techniques in physics}, Cambridge University Press, 1984

\bibitem{HSS03}  Higuchi, A., Sudbery, A., Szulc, J. , \textit{One-qubit reduced
states of a pure many-qubit state: polygon inequalities}, Phys. Rev.
Lett. 90, 107902, 2003

\bibitem{Klyachko2007} Klyachko, A., \textit{Dynamic symmetry approach to entanglement}, Proceedings of the NATO Advanced Study Institute on Physics and Theoretical Computer Science, IOS Press, Amsterdam, 2007

\bibitem{SHK} Sawicki, A., Huckleberry, A., Ku\'s, M. \textit{Symplectic geometry
of entanglement}, Comm. Math. Phys. 305, 441–468, 2011

\bibitem{K84} Kirwan, F. C. \textit{Convexity properties of the moment mapping},
III, Invent. Math. 77, 547552., 1984

\bibitem{atiyah} Atiyah, M., F. {\textit Convexity and commuting Hamiltonians}, Bull. London Math. Soc. 14, 1-15, 1982

\bibitem{DK00} Duistermaat, J., J., Kolk, J., A., C. \textit{Lie Groups}, Springer-Verlag, Berlin, 2000

\bibitem{MOS} Maci\c{a}\.{z}ek, T., Oszmaniec, M., Sawicki, A., {\it How many invariant polynomials are needed to decide local unitary equivalence of qubit states?}, J. Math. Phys. 54, 2013

\bibitem{hyperplanes} Aichholzer, O., Aurenhammer, F. \textit{Classifying Hyperplanes in Hypercubes}, SIAM Journal on Discrete Mathematics, v.9 n.2, p.225-232, 1996

\bibitem{WDGC12}Walter, M., Doran, B., Gross, D., Christandl, M. \textit{Entanglement Polytopes: Multiparticle Entanglement from Single-Particle Information}, Science 340 (6137), 1205-1208, 2013
\end{thebibliography}
\end{document}